\documentclass{llncs}

\usepackage{amssymb} 

\usepackage[vlined,linesnumbered,ruled]{algorithm2e}
\usepackage{mathtools}
\usepackage{tikz}
\usepackage{hyperref}

\DeclareMathOperator{\dist}{dist}
\DeclareMathOperator{\ecc}{ecc}
\DeclareMathOperator{\conv}{conv}

\begin{document}
	
\title{Characterising AT-free Graphs with BFS}
\author{Jesse Beisegel}

\institute{Brandenburg University of Technology}


\maketitle

\begin{abstract}	
	An asteroidal triple free graph is a graph such that for every independent triple of vertices no path between any two avoids the third. In a recent result from Corneil and Stacho, these graphs were characterised through a linear vertex ordering called an AT-free order. Here, we use techniques from abstract convex geometry to improve on this result by giving a vertex order characterisation with stronger structural properties and thus resolve an open question by Corneil and Stacho. These orderings are generated by a modification of BFS which runs in polynomial time. Furthermore, we give a linear time algorithm which employs multiple applications of (L)BFS to compute AT-free orders in claw-free AT-free graphs and a generalisation of these.

\end{abstract}

\section{Introduction}
In a classical paper of algorithmic graph theory by Lekkerkerker and Boland from the early 1960s~\cite{lekkekerker1962representation} the authors used a forbidden substructure called an \emph{asteroidal triple} to characterise interval graphs. An asteroidal triple is an independent triple of vertices, such that for any two of them there is a path that avoids the third. This definition gave rise to the introduction of the class of asteroidal triple free graphs (AT-free graphs) and due to the fact that these graphs form a superclass of both the interval and cocomparability graphs, there has been considerable research interest for the last two decades.

AT-free graphs are widely believed to exhibit a "linear structure"~\cite{kohler2015linear} akin to the interval graphs and two results in particular corroborate this claim: In~\cite{corneil1997asteroidal} it was shown that every AT-free graph contains a \emph{dominating pair}, i.e., a pair of vertices such that every path between them forms a dominating set for the whole graph. This result was strengthened in the same paper~\cite{corneil1997asteroidal} which characterised AT-free graphs with the so-called \emph{spine property}: A graph $ H $ has the spine property, if for every non-adjacent dominating pair $ s $ and $ t $ there exists a neighbour of $ t $, say $ t' $, such that $ s $ and $ t'$ are a dominating pair in the connected component of $ H-t $ that contains $ s $. As shown in~\cite{corneil1997asteroidal}, a graph $ G $ is an asteroidal triple free graph if and only if every connected induced subgraph of $ G $ has the spine property. This can be seen as a generalisation of the fact that the maximal cliques of interval graphs form a chain.

An important algorithmic tool in the theory of interval graphs has been their characterising linear vertex ordering, the \emph{interval order}. This is a linear ordering $ \tau = (v_1, \ldots ,v_n) $ of the vertices of a graph $ G=(V,E) $ such that for $ u \prec_{\tau} v \prec_{\tau} w $ and $ uw \in E $ we have $ uv \in E $. It was long conjectured that such a characterising linear vertex ordering must also exist for AT-free graphs and while in a recent result~\cite{corneil2015vertex} this conjecture was answered in the positive, the notion of these orderings leaves quite a bit of freedom.

Ideally, such an ordering would somehow capture the structure given in the spine property in~\cite{corneil1997asteroidal} (as it is in the case of interval orderings which immediately gives us the chain of maximal cliques). However, the so-called \emph{LexComp} ordering that is constructed in~\cite{corneil2015vertex} has one significant drawback: For some graphs the resulting ordering is "folded" in a way that seems to contradict our notion of linear behaviour. For example, given the path graph with $ 2n+1 $ vertices, the $ P_{2n+1} $, where the vertices are numbered from left to right along the path, we would expect any viable linear vertex ordering to be $ (1,2,\ldots ,2n+1) $ or its inversion. The algorithm in~\cite{corneil2015vertex}, on the other hand, might output $ (n+1,n,n+2,n-1, \ldots , 1, 2n+1) $. In addition, this construction can even yield vertex orders $ \tau := (v_1, \ldots ,v_n) $ such that there are $ i \in \{1, \ldots ,n \} $ for which $ G[v_1, \ldots ,v_i] $ is not connected - for example the circuit in five vertices, i.e., $ C_5 $. More examples can be found in Figure \ref{fig1}.

In an attempt to remedy this issue, the authors of~\cite{corneil2015vertex} investigate whether it is possible to find AT-free orderings that coincide with search orders. After proving that there are graphs $ G $ such that no LBFS ordering of $ G $ is an AT-free order, they conjecture that every AT-free graph has an AT-free order that is a BFS order.

\begin{conjecture}\cite{corneil2015vertex}
	Let $ G=(V,E) $ be an AT-free graph. Then there exists a BFS ordering $ \tau = (v_1, \ldots ,v_n) $ that is an AT-free order. 
\end{conjecture}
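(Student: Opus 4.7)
The plan is to prove the conjecture by combining the dominating pair theorem and the spine property of AT-free graphs, recalled in the excerpt, with a carefully designed tie-breaking rule for BFS. Since every connected AT-free graph admits a dominating pair $(s,t)$, I would start BFS at $s$, so that $v_1 = s$, and arrange the tie-breaking so that $t = v_n$. The layering of BFS then respects distances from $s$ automatically, and the task reduces to showing that within each layer the order can be chosen so that the whole sequence is an AT-free order.

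Concretely, I would formalise the tie-breaking as follows. At the generic step, having produced the prefix $v_1, \ldots, v_{i-1}$, the next vertex $v_i$ must lie on the current BFS frontier. Among those candidates, I would select a \emph{spine successor}: a vertex playing the role of $t'$ in the iterated spine property applied to (the component containing $t$ of) the remaining subgraph $G[V \setminus \{v_1, \ldots, v_{i-1}\}]$. Since, by the Corneil--Olariu--Stewart characterisation cited in the excerpt, every connected induced subgraph of $G$ has the spine property, such a vertex exists at every stage, and its repeated selection tracks a dominating path from $s$ to $t$ in a controlled way.

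Next, I would verify that the resulting ordering is an AT-free order in the sense of Corneil and Stacho. The argument would proceed by contradiction: if the ordering fails to be an AT-free order, then the forbidden pattern in the sequence yields an independent triple that should be asteroidal. Using the fact that each $v_i$ was chosen to respect a dominating pair/spine structure in the current remainder, this triple either contradicts AT-freeness of $G$ directly or violates the defining property of the dominating pair that governed the choice of $v_i$.

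The main obstacle will be reconciling the rigidity of the BFS layering with the more flexible local choice offered by the spine property. The spine property guarantees a neighbour of $t$ with the desired extension property, but nothing a priori forces such a neighbour to lie in the current BFS layer rather than deeper in the graph. The heart of the proof is therefore the claim that a spine successor can always be found on the BFS frontier, and this is where I expect the abstract convex geometry machinery alluded to in the abstract to enter: viewing the structure induced by dominating pairs as an antimatroid-like closure system should let one prove that the extreme points of the current remainder always include a vertex at BFS distance matching the current layer, so that the greedy spine-driven choice is compatible with the BFS constraint.
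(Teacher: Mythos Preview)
Your proposal is a plan rather than a proof, and the gap you yourself flag is real. The spine property peels a vertex off the $t$-end of a dominating pair: it supplies a neighbour $t'$ of $t$ with $(s,t')$ dominating in the component of $H-t$ containing $s$. Even if you reverse the roles of $s$ and $t$ so as to peel from the $s$-end, after removing $s$ you have a dominating pair $(s',t)$ in the relevant component, and the next application of the spine property only promises a neighbour of $s'$. Such a vertex may well lie in $L_G^2(s)$ while unprocessed vertices of $L_G^1(s)$ remain, and then BFS is forbidden to take it. Your hoped-for fix---that an extreme point of the remainder always sits in the current BFS layer---is exactly the missing lemma, and nothing in the spine property or the dominating-pair theorem yields it. The second half of your plan, the contradiction argument that the resulting order is AT-free, is also left entirely unspecified.

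The paper's proof sidesteps this obstacle by using a different tie-break and never invoking the spine property. Among the BFS-eligible vertices at step $k$, it selects one minimising $|\conv(\{v_1,\ldots,v_k\})|$ in the domination convexity. The inductive step is then a direct consequence of the anti-exchange property: if the new prefix failed to be convex, some $p$ outside it would satisfy $(v,p,v_k)\in\mathcal{B}_D$ with $v$ in the prefix, whence anti-exchange gives $\conv(\{v_1,\ldots,v_{k-1}\}+p)\subsetneq\conv(\{v_1,\ldots,v_{k-1}\}+v_k)$; so $p$ was a strictly better candidate and hence BFS-ineligible, and a short layering argument then exhibits $\{v,v_k,p\}$ as an asteroidal triple. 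The convex-geometry machinery thus enters not to reconcile spine successors with BFS layers, as you speculate, but as the tie-break rule itself; the spine property appears only later, in the strengthening to monotone dominating pair orders.
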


We will prove an even stronger version of this conjecture, and show how such an order can be used to wed the notion of an AT-free ordering to the spine property. We will also give a polynomial time algorithm to compute such an order that takes approximately the same time as the previous best known algorithm to compute AT-free orders, i.e. $ \mathcal{O}(nm) $~\cite{corneil2015vertex}. The best known algorithm to recognise AT-free graphs uses fast matrix multiplication and takes $ \mathcal{O}(n^{2.82}) $ time~\cite{kratsch2006between} and it can be shown that recognition of AT-free graphs is at least as hard as recognising graphs without an independent set of size three~\cite{spinrad2003efficient}.

For the special case of claw-free AT-free graphs and a generalisation of these we give linear time algorithms to compute AT-free (L)BFS orders. This is a surprising result, as it was shown in~\cite{hempel2002claw} that the recognition of claw-free AT-free graphs is at least as hard as triangle recognition. This dichotomy is of striking resemblance to the case of comparability graphs, where a characterising linear ordering in the form of a transitive orientation can be found in linear time, while there is no known recognition algorithm that is faster than matrix multiplication~\cite{spinrad2003efficient}. Due to these facts, we conjecture that it is possible to compute AT-free orderings in linear time in the general case using some form of modified breadth-first-search. As is the case for comparability graphs, such a linear ordering might then be used for linear time optimisation algorithms that are robust for AT-free graphs, i.e. which can be applied without solving recognition first (for further information on robust algorithms see~\cite{spinrad2003efficient}).

\begin{figure}
\begin{minipage}{0.4\textwidth}	
	\begin{tikzpicture}[vertex/.style={inner sep=2pt,draw,circle},auto]
	
	\node[vertex,label=below:$1$] (1) at (0,0){};
	\node[vertex,label=below:$2$] (2) at (0.75,0){};
	\node[vertex,label=below:$3$] (3) at (1.5,0){};
	\node[vertex,label=below:$4$] (4) at (2.25,0.5){};
	\node[vertex,label=below:$5$] (5) at (2.25,-0.5){};
	\node[vertex,label=below:$6$] (6) at (3,0){};
	\node[vertex,label=below:$7$] (7) at (3.75,0){};
	\node[vertex,label=below:$8$] (8) at (4.5,0){};
	
	\draw[] (1)--(2)--(3)--(4)--(6)--(5)--(3)--(5)--(6)--(7)--(8);
	
	\end{tikzpicture}	
\end{minipage}
\hfill
\begin{minipage}{0.5\textwidth}
	\begin{tabular}{ll}
		Arbitrary AT-free: & $ (4,5,2,7,3,6,1,8) $ \\
		LexComp: & $ (4,5,3,6,2,7,1,8) $ \\
		$ \text{BFS}^{\conv}(G,1)$: & $(1,2,3,4,5,6,7,8)$ \\
	\end{tabular}
\end{minipage}
\caption{Graph with its various AT-free orders}
\label{fig1}
\end{figure}

\section{Preliminaries}

In the following, we will exclusively refer to simple connected graphs $ G $ with vertex set $ V $ and edge set $ E $. The \emph{neighbourhood} of $ v $ in $ G $ is the set $ N_G(v) := \{w ~:~ vw \in E\} $ and $ N[v] := N(v)+v$. A vertex with only one neighbour in $ G $ will be called a \emph{pendant vertex}. A \emph{walk} $ W  $ of length $ k $ in $ G $ is a succession of vertices $ (v_1, \ldots , v_{k+1}) $ such that $ v_iv_{i+1} \in E $ for all $ i \in \{1, \ldots , k\} $. If a walk $ P $ has the additional property that all vertices are distinct, we call $ P $ a \emph{path}. We say that a path $ P $ \emph{avoids} a vertex $ v $, if $ v $ does not have any neighbours on $ P $, while a vertex $ v $ \emph{intercepts} a path $ P $ if it has at least one neighbour on $ P $. 

The \emph{distance} between two vertices $ s $ and $ t $ is the length of a shortest path between these vertices and will be denoted by $ \dist_G(s,t) $. The set of vertices that have distance $ k $ to a vertex $ s $ is called the \emph{$ k $-th distance layer} from $ s $ of $ G $ and is denoted by $ L_G^k(s) $. For every vertex $ v \in V $ we say that $ N_s^k(v):= L_G^k(s) \cap N_G(v) $. A vertex $ x $ with largest distance from $ s $ is called \emph{eccentric} with respect to $ s $ and its distance to $ s $ is the \emph{eccentricity} $ \ecc_G(s) $ of $ s $. The \emph{eccentricity} of $ G $ is the largest such value among all vertices.

A subset $ D \subseteq V $ is called a \emph{dominating set} of $ G $ if every vertex in $ V $ has a neighbour in $ D $. If the set $ D $ forms a path in $ G $ it is called a \emph{dominating path}. Two vertices $ s $ and $ t $ of $ G $ form a \emph{dominating pair}, if every path between them is dominating. A permutation $ \tau := (v_1, \ldots , v_n) $ of the vertices of $ G $ will be called a \emph{linear vertex ordering}.

Given a linear vertex ordering $ \tau $ we can formulate a derivative of \emph{Breadth First Search} called BFS+($ \tau $). This algorithm is a breadth first search which prioritises vertices that are further to the right in $ \tau $, i.e. at any point of the search where neighbours of the current vertex are added to the queue, the vertices with highest $ \tau $-value are added first.

Lexicographic Breadth First Search (Algorithm~\ref{lbfs}) was introduced in~\cite{rose1976algorithmic} to recognise chordal graphs and has been an important ingredient in many recognition and optimisation algorithms since.

\begin{algorithm}
	\KwIn{Connected graph $G=(V,E)$ and a distinguished vertex $ s \in V $}
	\KwOut{A vertex ordering $ \tau $}
	\Begin{
		$ label(s) \leftarrow n $\;
		
		\For{each vertex $v \in V-{s}$}{$ label(v) \leftarrow \emptyset $\;}
		
		\For{$ i \leftarrow 1 $ to $ n $}{pick an unnumbered vertex $ v $ with lexicographically largest label\;\label{slice}
			$ \tau(i) \leftarrow v$\;
			\For{each unnumbered vertex $ u \in N(v) $}{append $ (n-i) $ to $ label(w) $\;}}
		
	}\caption{LBFS}
	\label{lbfs}
\end{algorithm}

If two vertices have the same label in step~\ref{slice}, we say that they are \emph{tied}. We call a set of tied vertices $ S $ encountered in step~\ref{slice} of Algorithm~\ref{lbfs} a \emph{slice}. Given an LBFS order $ \tau $ and two vertices $ u $ and $ v $ with $ u \prec_{\tau} v $, we denote the vertex-minimal slice with respect to $ \tau $ containing $ u $ and $ v $ as $ \Gamma_{u,v}^{\tau} $.

As before with the BFS, given a linear vertex order $ \tau $, we can define an LBFS+($ \tau $) in the following way: At any point in the search at which we encounter a slice, i.e. a set of tied vertices, the vertex of highest $ \tau $-value is chosen first.

There are many interesting properties and applications of LBFS, and some of these can be found in~\cite{corneil2004lexicographic}. Here we will need one result in particular, which is a useful tool for the analysis of LBFS and LBFS+ orders.

\begin{lemma}[Prior Path Lemma]\cite{corneil2009lbfs}
	Let $ \tau $ be an arbitrary LBFS of a graph $ G $ and let $ u,v \in V $ with $ u \prec_{\tau} v $. Let $ w $ be the $ \tau $-first vertex of the connected component $ C_u $ of $ \Gamma^{\tau}_{u,v} $ containing $ u $. There exists a $ w $-$ u $-path in $ \Gamma^{\tau}_{u,v} $ all of whose vertices, with the possible exception of $ u $, are not adjacent to $ v $. Moreover, all vertices on this path, other than $ u $, occur before $ u $ in $ \tau $. Such a path is called a \emph{prior path}.
\end{lemma}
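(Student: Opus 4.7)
\medskip

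\noindent \textbf{Proof proposal.}
The plan is to identify the $\tau$-first vertex of $S := \Gamma^{\tau}_{u,v}$ explicitly and show that, whenever $w \neq u$, it coincides with $w$, yielding the prior path as either the single edge $wu$ or the trivial one-vertex path at $w=u$.

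Let $i^*$ be the iteration at which $S$ is encountered as the lex-maximum slice and let $z := \tau(i^*)$ be the vertex picked at that iteration, so $z$ is the $\tau$-first vertex of $S$. By the minimality of $S$, numbering $z$ must separate $u$ from $v$: otherwise both would remain together in a common sub-slice of $S$ (either the higher sub-slice $S \cap N(z)$ or the lower sub-slice $S \setminus N[z]$), and that sub-slice --- which is strictly smaller than $S$ because $z$ does not belong to it --- would be a slice containing both, contradicting the choice of $S$. Hence $z$ is adjacent to exactly one of $\{u,v\}$, or else $z = u$. In the latter case $u$ is itself the $\tau$-first vertex of $S$ and therefore of $C_u$, so $w = u$ and the lemma is trivial.

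In the former case, the inequality $u \prec_{\tau} v$ forces $u$ into the higher sub-slice $S \cap N(z)$ (which is processed before $S \setminus N[z]$), so $zu \in E$ and $zv \notin E$. Because $zu \in E$, the vertex $z$ belongs to the $G[S]$-component $C_u$; being the $\tau$-first vertex of $S \supseteq C_u$, it is also the $\tau$-first vertex of $C_u$, so $z = w$. The single edge $wu$ is then the desired prior path: its only non-$u$ vertex is $w$, which lies in $S$, satisfies $w \prec_{\tau} u$ and is non-adjacent to $v$.

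The only delicate point is the assertion that the separation of $u$ and $v$ occurs at iteration $i^*$ and cannot instead be postponed to some later iteration. I would justify this by the standard monotone refinement property of LBFS: labels grow only by appending, so once two labels differ they can never become equal again. Together with the minimality of $S$ this localises the separating step to exactly iteration $i^*$, which is all that is needed to close the argument.
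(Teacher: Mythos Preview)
The paper does not prove this lemma; it is imported from \cite{corneil2009lbfs} as a known tool. Your attempt, however, contains a real gap.

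You claim that the $\tau$-first vertex $z$ of $S=\Gamma^{\tau}_{u,v}$ must already separate $u$ from $v$, so that the prior path has length at most one. This is false. Take $G$ on $\{s,z,a,u,v\}$ with edges $sz,\,sa,\,su,\,sv,\,za,\,au$, and run LBFS from $s$ choosing $z$ first among the tied vertices; one obtains $\tau=(s,z,a,u,v)$. The slices encountered are $\{s\}$, $\{z,a,u,v\}$, $\{a\}$, $\{u\}$, $\{v\}$, so $\Gamma^{\tau}_{u,v}=\{z,a,u,v\}$ with first vertex $z$, and yet $z$ is adjacent to neither $u$ nor $v$. Here $C_u=\{z,a,u\}$, $w=z$, and the required prior path is $(z,a,u)$, of length two. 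The error is the assertion that $S\setminus N[z]$ is itself a slice: it is only a label class, and before it ever becomes the maximum-label set it may already have been refined by vertices of $S\cap N(z)$ (in the example, by $a$). Minimality of $S$ therefore yields no contradiction, and your closing appeal to monotone refinement---which forbids re-merging of labels but says nothing against delayed splitting---does not rescue the step.

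What your argument does establish is weaker: some $q_1\in S$ with $q_1\prec_\tau u$, $q_1u\in E$ and $q_1v\notin E$ must exist (otherwise $u$ and $v$ would eventually lie in a strictly smaller common slice). Iterating this---replace $u$ by $q_1$ and repeat---produces a strictly $\tau$-decreasing chain $u=q_0,q_1,q_2,\dots$ of vertices in $C_u$, each $q_i$ with $i\ge 1$ non-adjacent to $v$; one then checks that the chain can only terminate at $w$. That inductive construction, not a single edge, is the missing ingredient.
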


Finally, a graph will be called \emph{claw-free}, if it does not contain a claw graph, i.e. the $ K_{1,3} $, as an induced subgraph. We will call the three independent vertices the \emph{prongs} and the fourth vertex the \emph{base} of the claw.

\section{Convex Geometries and AT-free Graphs}

\begin{definition}\cite{edelman1985theory}
A set $ V $ and a family of subsets $ \mathcal{C} $ of $ V $ form a \emph{convexity space}, if $ \emptyset , V \in  \mathcal{C} $ and $ \mathcal{C} $ is closed under intersection. The smallest convex set $ \conv(X) $ containing a set $ X \subseteq V $ is called the \emph{convex hull} of $ X $.
We say that a convexity space $ (V, \mathcal{C}) $ is a \emph{convex geometry}, if for every convex set $ X $ and two points $ p,q \in V \backslash X$:
\begin{equation*}
q \in \conv (X+p) \Rightarrow p \notin \conv (X+q).
\end{equation*}
This is sometimes referred to as the \emph{anti-exchange property}. A convex set whose complement is also convex is called a \emph{halfspace}.
\end{definition}

The anti-exchange property motivates an ordering of the ground set $ V $ of a convex geometry: An ordering $ \tau =(v_1, \ldots , v_n) $ is a \emph{convexity ordering}, if $ \{v_1, \ldots v_i\} $ is convex for every $ i \in \{1, \ldots , n\} $. If $ \{v_1, \ldots ,v_i \} $ is a halfspace for every $ i \in \{1, \ldots , n\} $, then we call $ \tau $ a halfspace ordering.

One way to define a convexity space is through strict betweenness. Following~\cite{chvatal2009antimatroids} we say that a \emph{strict betweenness} over a ground set $ V $ is a ternary relation $ \mathcal{B} \subset V^3 $ such that
\begin{equation*}
(a,b,c) \in \mathcal{B} \text{ implies that } (c,b,a) \in \mathcal{B} \text{ and } a,b, \text{ and } c \text{ are distinct.}
\end{equation*}

The convexity space with regard to this betweenness is then defined to be the pair $ (V,\mathcal{C}_{\mathcal{B}}) $ where
\begin{equation*}
\mathcal{C_{\mathcal{B}}}:= \{C \subseteq V ~:~ \{a,c\} \subseteq C \text{ and } (a,b,c) \in \mathcal{B} \text{ implies }  b \in C  \}.
\end{equation*}
On graphs we can define just such a strict betweenness on the set of vertices and thus we can construct a convexity space in the following way:

\begin{definition}
	Given a graph $ G=(V,E) $ we say that $ (x,y,z) \in \mathcal{B}_{D}(G) $, if there is a chordless $ x$-$y $-path that avoids $ z $ and a chordless $ y$-$z $-path that avoids $ x $. The set of vertices $ y $ with $ (x,y,z) \in \mathcal{B}_{D}(G) $ is called the domination interval of $ x $ and $ z $ and is denoted by $ I_{D}(x,z) $. The ternary relation $ \mathcal{B}_D(G) $ is called the \emph{domination betweenness} of $ G $ and it is easy to see that this is a strict betweenness. As a result, we obtain a convexity space $ (V,\mathcal{C}_{\mathcal{B}_D}(G)) $ which we will call the \emph{domination convexity} of $ G $.
\end{definition}

A vertex $ y $ is said to be \emph{admissible}, if there are no two vertices $ x $ and $ z $ such that $ (x,y,z) \in \mathcal{B}_{D}(G)$. An \emph{AT-free ordering} is an ordering $ \tau = (v_1, \ldots , v_n) $ of the vertices such that for any $ (x,y,z) \in \mathcal{B}_{D}(G) $ we have $ y \prec_{\tau} x $ or $ y \prec_{\tau} z $. It is easy to see that for any such ordering $ \{v_1, \ldots , v_i\} $ is domination convex for any $ i \in \{1, \ldots , n \} $. If $ \tau $ is such that for any $ (x,y,z) \in \mathcal{B}_{D}(G) $ we have $x \prec_{\tau} y \prec_{\tau} z $ we say that it is a \emph{bilateral AT-free ordering} of $ G $.

The connection between convexity theory and AT-free graphs was recently made in~\cite{chang2015gray} and~\cite{chang2017convex} and it was furthermore shown that the convexity space thus defined is in fact a convex geometry. In the following we have bundled that result with a number of other characterising properties of AT-free graphs:

\renewcommand{\labelenumi}{(\roman{enumi})}
\renewcommand{\theenumi}{(\roman{enumi})}

\begin{theorem}\cite{broersma1999independent}\cite{chang2015gray}\cite{chang2017convex}\cite{corneil1997asteroidal}\cite{corneil2015vertex}\cite{edelman1985theory}\cite{koehler1999graphs}
	Given a graph $ G $, its domination betweenness $ \mathcal{B}_D $ and its domination convexity $ \mathcal{C}_{\mathcal{B}_D} $, the following statements are equivalent:
	\begin{enumerate}
		\item $ G $ is AT-free.
		\item If $ (w,x,y) \in \mathcal{B}_D(G) $ and  $ (x,y,z) \in \mathcal{B}_D(G) $ then  $ (w,x,z) \in \mathcal{B}_D(G) $, i.e., $ \mathcal{B}_D(G) $ is a transitive ternary relation.
		\item Every connected induced subgraph of $ G $ has the spine property.
		\item $ G $ has an AT-free order.
		\item $ (V,\mathcal{C_{\mathcal{B}_D}}(G)) $ is a convex geometry.			
	\end{enumerate}
\end{theorem}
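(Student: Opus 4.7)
The plan is to prove the five equivalences via a single cycle (i)$\Rightarrow$(iii)$\Rightarrow$(ii)$\Rightarrow$(v)$\Rightarrow$(iv)$\Rightarrow$(i), with each arrow corresponding to one result from the cited literature. This way condition (i) sits at both ends of the cycle and each of the structural characterisations (ii)--(v) is obtained by chaining the previous step, which mirrors how the results were historically discovered.

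For (i)$\Rightarrow$(iii) I would invoke the Corneil--Olariu--Stewart spine-property theorem of \cite{corneil1997asteroidal} directly; since AT-freeness is hereditary, applying it to each connected induced subgraph yields the claim. For (iii)$\Rightarrow$(ii), given $(w,x,y),(x,y,z)\in\mathcal{B}_D(G)$ I would restrict to the connected subgraph spanned by $w,x,y,z$ together with the four witnessing chordless paths, and exploit the spine property of this subgraph to reroute one of the paths, producing a chordless $w$-$x$ path avoiding $z$ and a chordless $x$-$z$ path avoiding $w$, which is exactly $(w,x,z)\in\mathcal{B}_D(G)$; alternatively one may quote transitivity as established in \cite{koehler1999graphs,corneil2015vertex}. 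The implication (ii)$\Rightarrow$(v) is then a purely combinatorial consequence of transitivity of the underlying strict betweenness: assuming $q\in\conv(X+p)$ and $p\in\conv(X+q)$ with $p,q\notin X$, one obtains a pair of betweenness triples that under transitivity force an impossible self-reference, which is the argument at the core of \cite{chang2015gray,chang2017convex}. For (v)$\Rightarrow$(iv) I would use Edelman's theorem from \cite{edelman1985theory} that every finite convex geometry admits a convexity ordering; by the definitions, such an ordering of $\mathcal{C}_{\mathcal{B}_D}(G)$ is exactly an AT-free order. Finally (iv)$\Rightarrow$(i) is immediate: an asteroidal triple $\{a,b,c\}$ yields the cyclic betweenness triples $(a,b,c),(b,c,a),(c,a,b)\in\mathcal{B}_D(G)$, and no linear order can place each of $a,b,c$ before one of the other two.

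The main obstacle is (iii)$\Rightarrow$(ii): deriving transitivity of $\mathcal{B}_D$ from the hereditary spine property requires explicit path rerouting that maintains both chordlessness and the avoidance constraints, and the temptation is to invoke the spine property on the wrong subgraph. The remaining steps are either formal (the betweenness-to-convex-geometry translation and Edelman's shelling theorem) or reduce to a short counting argument (the asteroidal-triple obstruction to a linear order), so once (iii)$\Rightarrow$(ii) is settled the result assembles cleanly from the cited papers.
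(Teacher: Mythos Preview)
The paper does not give its own proof of this theorem: it is stated as a compendium of results from the cited references, with no argument in the text. Your plan of assembling the equivalences from those same references is therefore entirely in the intended spirit.

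However, your chosen cycle has a real gap at (iii)$\Rightarrow$(ii). The direct argument you sketch---restrict to the induced subgraph $H$ spanned by $w,x,y,z$ and the four witnessing chordless paths, then use the spine property of $H$ to ``reroute''---is not a proof: the spine property concerns dominating pairs and a distinguished neighbour of one endpoint, and there is no evident mechanism by which it manufactures a chordless $w$--$x$ path avoiding $z$ together with a chordless $x$--$z$ path avoiding $w$. You yourself flag this as the main obstacle, but the fallback you offer does not rescue it either: \cite{koehler1999graphs} and \cite{corneil2015vertex} derive transitivity of $\mathcal{B}_D$ from AT-freeness, i.e.\ they prove (i)$\Rightarrow$(ii), not (iii)$\Rightarrow$(ii). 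Invoking them at this point silently assumes (iii)$\Rightarrow$(i), which is exactly what the remainder of your cycle is supposed to establish, so the argument becomes circular. The clean repair is to abandon the single-cycle format: quote the full equivalence (i)$\Leftrightarrow$(iii) from \cite{corneil1997asteroidal} (both directions are proved there), run the remaining implications as the cycle (i)$\Rightarrow$(ii)$\Rightarrow$(v)$\Rightarrow$(iv)$\Rightarrow$(i) using the sources you already name, and attach (iii) as a spoke through (i). This matches how the literature actually organises these results and removes the problematic step entirely.
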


\section{AT-free BFS-Orders}

\begin{theorem}\label{theorem2}
Let $ G $ be a connected AT-free graph. Then for any vertex $ s \in V $ there is a linear vertex order $ \tau := (s=v_1, \ldots , v_n) $ that is an AT-free order and a BFS order.
\end{theorem}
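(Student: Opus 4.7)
The plan is to build the required order with a tie-breaking BFS from $s$, call it $\text{BFS}^{\conv}(G,s)$, which at every iteration selects from the current BFS frontier a vertex whose addition to the growing prefix preserves $\mathcal{B}_D$-convexity. Since a linear order is AT-free if and only if each of its prefixes is domination convex, such an output would be simultaneously an AT-free order and a BFS order starting at $s$.

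The whole theorem reduces to the following existence lemma, which I would isolate as the crux: \emph{if $S$ is a proper $\mathcal{B}_D$-convex set with $s \in S$ and $F$ denotes the set of unvisited vertices attaining the smallest BFS distance from $s$, then there is some $v \in F$ with $S \cup \{v\}$ convex.} I would prove this by contradiction. By part (v) of the characterisation above, $(V, \mathcal{C}_{\mathcal{B}_D})$ is a convex geometry, so accessibility yields some $w \in V \setminus S$ with $S \cup \{w\}$ convex; if $w \notin F$, then $w$ lies in a strictly deeper BFS layer. Taking a shortest $s$-$w$-path and walking along it from $s$ to the first vertex $w' \in F$ on the path, I would invoke the transitivity of $\mathcal{B}_D$ (part (ii)) together with the spine property (part (iii)) to transfer the extension from $w$ back to $w'$, showing that $S \cup \{w'\}$ is also convex, contradicting the initial assumption that no frontier vertex extends $S$.

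Once the lemma is in hand, $\text{BFS}^{\conv}$ never gets stuck and outputs an order $\tau = (s = v_1, \ldots, v_n)$. It is a BFS order by construction of the procedure, and an AT-free order because the invariant ensures each prefix $\{v_1, \ldots, v_i\}$ is $\mathcal{B}_D$-convex.

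The main obstacle is the existence lemma: the convex-geometry accessibility is a global guarantee, producing some extension vertex $w$ anywhere in $V \setminus S$, whereas the BFS constraint forces the choice into the local frontier $F$. The swap-along-shortest-path step, where AT-freeness is used through the transitive betweenness and the spine property, is where the real work will lie.
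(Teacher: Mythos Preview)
Your high-level framework matches the paper's: run BFS from $s$ with a tie-break that keeps every prefix $\mathcal{B}_D$-convex. You correctly isolate the existence lemma as the crux, but your proposed proof of it is a genuine gap. Your plan is to take an accessibility witness $w$ with $S + w$ convex and, if $w$ lies beyond the frontier, ``transfer'' to a frontier vertex $w'$ on a shortest $s$--$w$ path via transitivity of $\mathcal{B}_D$ and the spine property. Neither tool delivers this. Transitivity links triples sharing two entries, and there is no natural such pair relating $w'$ to $w$; the spine property concerns dominating pairs in induced subgraphs and says nothing about convexity of $S + w'$. Concretely, if $(a, q, w') \in \mathcal{B}_D$ with $a \in S$ and $q \notin S$ witnesses non-convexity of $S + w'$, you would need $(a, q, w) \in \mathcal{B}_D$ to contradict convexity of $S + w$, but nothing forces this, and $q = w$ is not even ruled out.

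The paper's argument is different and bypasses the transfer problem entirely. Among the BFS-eligible vertices it picks the one $v_k$ minimising $|\conv(S + v_k)|$ and shows directly that $S + v_k$ is convex: otherwise some $(v, p, v_k) \in \mathcal{B}_D$ has $v \in S$, $p \notin S$; anti-exchange gives $\conv(S + p) \subsetneq \conv(S + v_k)$, so by minimality $p$ is not adjacent to the current BFS parent $y$; one then checks $p$ is also not adjacent to the BFS parent $w$ of $v$, whence the walk $v$--$w$--$\cdots$--$s$--$\cdots$--$y$--$v_k$ through the BFS tree avoids $p$, and $\{v, p, v_k\}$ is an asteroidal triple. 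Only anti-exchange and the BFS layer structure are used---no transitivity, no spine property. (A smaller issue: your $F$ is all unvisited vertices of minimum distance, but in a queue-based BFS the eligible set at step $k$ is only the unvisited neighbours of one specific visited vertex $y$; these can differ, so even a successful transfer might land on a $w'$ that is not selectable.)
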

\begin{proof}
Let $ \tau $ be a BFS order starting in an arbitrary vertex $ s $ of $ G $ with the following tie-break rule: At each step $ i $ choose the vertex $ v_i $ such that $ \conv (\{s=v_1, \ldots, v_i\}) $ has smallest cardinality among all allowed choices at step $ i $. We will show, that $ \{s=v_1, \ldots , v_i\} $ is convex for $ i \in \{1, \ldots , n\} $, which implies that $ \tau $ is an AT-free order. The proof will be by induction on the BFS steps.

For $ k =1$ the claim is true, as every one element set is convex in $ \mathcal{C} $.

We show the claim for step $ k $, assuming it is true for $ k-1 $. Suppose $ v_k $ is chosen. Then $ \{v_1, \ldots, v_{k-1}\} $ is convex and $ v_k $ is such that $ \conv(\{v_1,\ldots, v_{k-1}\} + v_k) $ is smallest among all vertices that can be chosen by the search in step $ k $. As we are conducting a BFS there is a vertex $ y \in \{v_1, \ldots v_{k-1}\} $ that is adjacent to all possible choices, but no others. Assume that $ \{v_1, \ldots ,v_k\} $ is not convex. Then there is a vertex $ p \in V \backslash \{v_1, \ldots, v_k\} $, such that $ (v,p,v_k) \in \mathcal{B}_D $ for some vertex $ v \in \{v_1, \ldots v_{k-1}\} $. As $ (V,\mathcal{C_{\mathcal{B}_D}}(G)) $ is a convex geometry, we can deduce that $ \conv(\{v_1, \ldots, v_{k-1}\}+p) \subsetneq \conv(\{v_1, \ldots v_{k-1}\}+v_k)$. This implies that $ yp \notin E $ due to the choice of $ v_k $. Let $ w $ be the vertex that forced $ v $ into the BFS ordering (it may be that $ y=w $). Due to the definition of BFS we see that $\dist_G(s,w) \leq \dist_G(s,y) <  \dist_G(s,p) $. We can assume that $ wp \notin E $, as otherwise $ p $ would have been chosen before $ v_k $. Therefore, the vertices $ \{v,v_k,p\} $ form an asteroidal triple, due to the $ p $-avoiding walk from $ v $ to $ v_k $ along $ w $, $ s $ and $ y $. This is a contradiction to fact that $ G $ is AT-free.

\qed
\end{proof}

This theorem implies an algorithm for computing an AT-free BFS order which will be denoted by $\text{BFS}^{\conv}$.

\begin{algorithm}
	\KwIn{Connected graph $G$ and a distinguished vertex $ s \in V $}
	\KwOut{A vertex ordering $ \sigma $}
	\Begin{
		Compute $I(v,w)$ for every pair of vertices $v,w \in V $\;
		$ L \leftarrow \{s\} $\;
		$ S \leftarrow \emptyset $\;
		\For{$ i \leftarrow 1 $ to $ n $}{Choose the first vertex $ v $ from $ L $ such that there are no $ u \in S $ and $ z \in V-S $ with $ z \in I(u,v) $\;
			Delete $ v $ from $ L $\;
			$ \sigma(i) \leftarrow v $\;
			$ S \leftarrow S \cup \{v\} $\;
			\For{each unnumbered vertex $ w $ adjacent to $ v $ }{\If{$ w \notin L$}{Append $ w $ to end of $ L $\;}}
		}
	}
	\caption{$ \text{BFS}^{\conv} $}
	\label{bfs-conv}
\end{algorithm}

Any such ordering $ \tau := (v_1, \ldots , v_n) $ obviously has the property that for every $ i \in \{1, \ldots ,n\}  $ the induced subgraph $ G[ \{v_1, \ldots , v_i\}] $ is connected. This is already an improvement on the orders produced by the algorithm given in~\cite{corneil2015vertex} and in Figure~\ref{fig1} we compare orders computed by the different algorithms. On the other hand, returning to the example given in the introduction, the $ P_{2k+1} $ path graph, we can see that starting the $\text{BFS}^{\conv}$ in vertex $ k+1 $ still yields an undesirable order.

Starting in an admissible vertex, which in the case of $ P_{2k+1} $ will be one of the endpoints or one of their neighbours, is an easy remedy of this problem. However, with a little modification to our search routine we can not only solve this issue, but make an intriguing link with the AT-free graphs characterisation through the spine property. We shall call a vertex ordering $ \tau = (v_1, \ldots , v_n) $ a \emph{monotone dominating pair order}, if for every $ i \in \{1, \ldots , n\} $ the vertices $ v_1 $ and $ v_i $ form a dominating pair in the induced subgraph $ G[v_1, \ldots , v_i] $.

\begin{theorem}\cite{corneil1999linear}\label{theorem3}
	Let $ G=(V,E) $ be a connected AT-free graph and suppose that $ s $ is an admissible vertex. Let $ \tau= (v_1, \ldots, v_n) $ be a vertex order produced by LBFS $ (G,s) $. Then for any $ i \in \{1, \ldots ,n\} $ the vertices $ v_1 $ and $ v_i $ form a dominating pair of $ G[v_1, \ldots, v_n] $, i.e., $ \tau $ is a monotone dominating pair order.
\end{theorem}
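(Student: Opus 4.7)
The plan is to reduce Theorem~\ref{theorem3} to the endpoint-version statement that in a connected AT-free graph $H$ with admissible vertex $s$, the last vertex of any LBFS of $H$ starting at $s$ forms a dominating pair with $s$, and then apply this endpoint result to each induced prefix $G_i := G[v_1, \ldots, v_i]$. The key structural observation driving the reduction is that the restriction of an LBFS of $G$ to such a prefix is itself an LBFS of the prefix subgraph starting at the same vertex.

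For the reduction I would verify four routine facts. First, $G_i$ is connected, which is a standard LBFS invariant since each newly numbered vertex has a previously numbered neighbour. Second, $G_i$ is AT-free as an induced subgraph of $G$. Third, $s$ remains admissible in $G_i$: any witness $(x,s,z) \in \mathcal{B}_D(G_i)$ consists of chordless paths in $G_i$ that are also chordless in $G$ (any chord would be an edge of $G_i$, hence of $G$), contradicting admissibility of $s$ in $G$. Fourth, at each step $j \leq i$ the LBFS label of an unnumbered vertex $v \in V(G_i)$ depends only on its neighbourhood in $\{v_1, \ldots, v_{j-1}\} \subseteq V(G_i)$, so the lex-maximum choice $v_j$ made in $G$ is still lex-maximum among unnumbered vertices of $G_i$. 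Applying the endpoint result to $G_i$ then yields that $v_1$ and $v_i$ form a dominating pair of $G_i$, which is the desired claim.

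The genuine work lies in the endpoint result, which I would prove by contradiction. Suppose $\{s,t\}$ is not a dominating pair of $G$ and let $P$ be an $s$-$t$ path avoided by some vertex $u$; passing to a chordless subpath preserves the avoidance, so there is a chordless $s$-$t$ path avoiding $u$. If I can additionally produce a chordless $s$-$u$ path avoiding $t$, then $(u,s,t) \in \mathcal{B}_D(G)$, contradicting admissibility of $s$. To construct this second path I would exploit the LBFS structure at the slice $\Gamma^{\tau}_{u,t}$: since $u \prec_{\tau} t$, the Prior Path Lemma provides a path from the $\tau$-first vertex $w$ of the $u$-component of this slice into $u$ whose internal vertices are all non-adjacent to $t$, and this path is then spliced with a path from $s$ to $w$ through earlier LBFS layers. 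The main obstacle is ensuring that the spliced $s$-to-$w$ portion also avoids $N[t]$, and this is precisely where admissibility of $s$ is essential, since without it the statement genuinely fails---as witnessed by an LBFS of $P_{2k+1}$ started at its middle vertex, which breaks the monotone dominating pair property at one of its prefixes.
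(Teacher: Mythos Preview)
The paper does not prove Theorem~\ref{theorem3}; it is quoted from \cite{corneil1999linear} and used as a black box, so there is no in-paper argument to compare against.

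Evaluating your proposal on its own merits: the reduction step is sound. Each prefix $G_i$ is a connected induced subgraph, hence AT-free; admissibility of $s$ is hereditary to induced subgraphs by exactly the argument you give; and the restriction of an LBFS to a prefix is an LBFS of that prefix because labels at step $j\le i$ depend only on neighbours among $v_1,\ldots,v_{j-1}\subseteq V(G_i)$. So reducing to the endpoint statement is legitimate.

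The gap is in the endpoint statement itself. You correctly obtain a chordless $s$--$t$ path missing $u$, and the Prior Path Lemma gives a $w$--$u$ path in $\Gamma^{\tau}_{u,t}$ avoiding $t$. But the sentence ``the main obstacle is ensuring that the spliced $s$-to-$w$ portion also avoids $N[t]$, and this is precisely where admissibility of $s$ is essential'' is not an argument --- it is a placeholder for one. Observe that any vertex numbered before the slice $\Gamma^{\tau}_{u,t}$ is adjacent to $u$ iff it is adjacent to $t$ (that is what defines the slice), so a naive $s$--$w$ path through earlier vertices will typically enter $N(t)$ whenever it enters $N(u)$. Admissibility of $s$ does not by itself prevent this; you still need a concrete construction showing how to route from $s$ to $w$ while staying outside $N[t]$, or a different contradiction altogether. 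As written, the proof stops exactly at the point where the real work begins.
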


In the following we will prove an analogous result for $ \text{BFS}^{\conv} $.

\begin{lemma}\label{lemma1}
	Let $ G=(V,E) $ be an AT-free graph and let $ s $ be an admissible vertex of eccentricity $k >2 $. If $ \tau :=(s=v_1, \ldots , v_n=t) $ is the output of $ \text{BFS}^{\conv}(G,s) $, then $ s $ and $ t $ form a dominating pair.
\end{lemma}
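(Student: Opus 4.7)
The plan is to argue by contradiction: assume $s$ and $t$ do not form a dominating pair, so there is a chordless $s$-$t$ path $P$ and a vertex $u$ that has no neighbour on $P$. Then $s$, $t$ and $u$ are pairwise non-adjacent (since $s,t \in P$ and $u$ avoids $P$), so $\{s,t,u\}$ is an independent triple. Because $G$ is AT-free this triple is not an asteroidal triple, and since the chordless path $P$ already witnesses an $s$-$t$ path avoiding $u$, at least one of the following must hold: either there is no chordless $s$-$u$ path avoiding $t$, or there is no chordless $t$-$u$ path avoiding $s$. I would then split into these two cases and derive a contradiction in each.

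In the first case a chordless $s$-$u$ path $Q'$ avoiding $t$ does exist. Combined with $P$ (viewed as a chordless $t$-$s$ path avoiding $u$), this directly yields $(t,s,u) \in \mathcal{B}_D(G)$, which contradicts the admissibility of $s$. So this case is immediate.

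The main work lies in the second case, where no chordless $s$-$u$ path avoids $t$. The plan is to construct a chordless $t$-$u$ path avoiding $s$; together with $P$ this will give $(s,t,u) \in \mathcal{B}_D(G)$, and since $\tau$ is an AT-free order by Theorem~\ref{theorem2} and $t=v_n$ is the last vertex, the AT-free-order condition (which would force $t \prec_\tau s$ or $t \prec_\tau u$) is violated. To build this path, take a shortest (hence chordless) $s$-$u$ path $Q = (s=q_0, q_1, \dots, q_\ell = u)$, where $\ell \le \mathrm{ecc}(s) = k$. By the case hypothesis $Q$ must contain a neighbour $a = q_i$ of $t$. The identities $\mathrm{dist}(s,a) = i$ and $\mathrm{dist}(s,t) \le i+1$ force $i \ge k-1$, and since $u \notin N(t)$ one has $i < \ell \le k$; together this pins down $\ell = k$ and $i = k-1$, so $a$ is the penultimate vertex of $Q$ and is adjacent to both $t$ and $u$. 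Since $k > 2$, the vertex $a$ sits at distance $k-1 \ge 2$ from $s$, and $t,u$ both sit at distance $k \ge 3$ from $s$, so none of $t,a,u$ lies in $N(s)$. Consequently the length-two path $t$-$a$-$u$ is a chordless $t$-$u$ path avoiding $s$, finishing the case.

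The hard part of the argument is the second case, and in particular the structural observation that every shortest $s$-$u$ path must have a neighbour of $t$ in its penultimate position; this is exactly where the eccentricity hypothesis $k > 2$ enters the proof, since without it the vertex $a$ could fall into $N(s)$ and the two-edge $t$-$u$ path would no longer avoid $s$. Everything else in the argument reduces cleanly to the definitions of admissibility, AT-free order, and the domination betweenness $\mathcal{B}_D$.
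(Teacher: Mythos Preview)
Your argument is correct and follows essentially the same route as the paper's proof: use admissibility of $s$ to force the situation where $t$ intercepts every $s$--$u$ path, deduce that $u$ lies in the last BFS layer $L_G^k(s)$ with a common neighbour of $t$ and $u$ in layer $k-1$, and then (using $k>2$) exhibit a short chordless $t$--$u$ path avoiding $s$ to obtain $(s,t,u)\in\mathcal{B}_D(G)$, contradicting that $\tau$ is an AT-free order with $t$ last. The paper compresses your Case~1 into a single clause (``as $s$ is admissible \ldots\ we must assume that $t$ sees every $w$--$s$ path'') and phrases the layer argument as $N_s^{k-1}(w)\subseteq N_s^{k-1}(t)$, but the content is the same.

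One expository remark: your announced case split (``there is no chordless $s$--$u$ path avoiding $t$'' versus ``there is no chordless $t$--$u$ path avoiding $s$'') does not match the cases you actually treat; what you really do is split on whether or not a chordless $s$--$u$ path avoiding $t$ exists. The preliminary observation that at least one of the two avoidance conditions fails (from AT-freeness of the triple) is never used and can be dropped---AT-freeness enters only through Theorem~\ref{theorem2}, which guarantees that $\tau$ is an AT-free order.
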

\begin{proof}
	Suppose $ s $ and $ t $ are not a dominating pair. Then there is an $ s $-$ t $-path $ P $ and a vertex $ w \in V $ such that $ P $ avoids $ w $. W.l.o.g. we can assume that $ P $ is induced. As $ s $ is admissible and $ sw, st \notin E $ we must assume that $ t $ sees every $ w $-$s$-path. Therefore $ w $ must be in the distance layer $  L_G^k(s) $ and $ N_s^{k-1}(w) \subseteq N_s^{k-1}(t) $. As $ k > 2 $, we can deduce that $ (w,t,s) \in \mathcal{B}_D(G) $ which is a contradiction to $ \tau $ being an AT-free order.
	
	\qed
\end{proof}

However, applying a $ \text{BFS}^{\conv}$ with an admissible start vertex must not always result in a monotone dominating pair order, as can be seen in Figure~\ref{fig2}. 

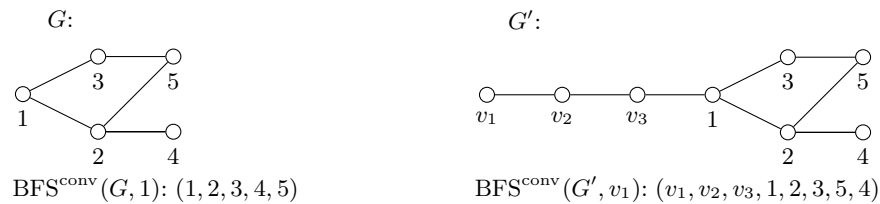
\begin{figure}
	\begin{minipage}{0.3\textwidth}
	\begin{tikzpicture}[vertex/.style={inner sep=2pt,draw,circle},auto]
	
	\node[] (G) at (0.5,1.5) {$G$:};
	
	\node[vertex,label=below:$1$] (1) at (0,0.5){};
	\node[vertex,label=below:$2$] (2) at (1,0){};
	\node[vertex,label=below:$3$] (3) at (1,1){};
	\node[vertex,label=below:$4$] (4) at (2,0){};
	\node[vertex,label=below:$5$] (5) at (2,1){};

	\draw[] (1)--(2)--(4)--(2)--(5)--(3)--(1);
	
	\end{tikzpicture}

	\end{minipage}
\hfill
	\begin{minipage}{0.5\textwidth}
		\begin{tikzpicture}[vertex/.style={inner sep=2pt,draw,circle},auto]
		
		\node[] (G) at (-2.5,1.5) {$G'$:};
		
		\node[vertex,label=below:$v_1$] (v1) at (-3,0.5){};
		\node[vertex,label=below:$v_2$] (v2) at (-2,0.5){};
		\node[vertex,label=below:$v_3$] (v3) at (-1,0.5){};
		\node[vertex,label=below:$1$] (1) at (0,0.5){};
		\node[vertex,label=below:$2$] (2) at (1,0){};
		\node[vertex,label=below:$3$] (3) at (1,1){};
		\node[vertex,label=below:$4$] (4) at (2,0){};
		\node[vertex,label=below:$5$] (5) at (2,1){};
		
		\draw[] (v1)--(v2)--(v3)--(1)--(2)--(4)--(2)--(5)--(3)--(1);
		
		\end{tikzpicture}
	\end{minipage}

\begin{minipage}{0.4\textwidth}
	\begin{tabular}{ll}
		$ \text{BFS}^{\conv}(G,1)$: & $( 1, 2, 3, 4, 5)$ 
	\end{tabular}
\end{minipage}
\hfill
\begin{minipage}{0.5\textwidth}
	\begin{tabular}{ll}
			\begin{tabular}{ll}
				$ \text{BFS}^{\conv}(G',v_1)$: & $( v_1, v_2, v_3, 1, 2, 3, 5, 4)$ 
			\end{tabular}
	\end{tabular}
\end{minipage}
	\caption{Graph for which $ \text{BFS}^{\conv}$ does not necessarily output a monotone dominating pair ordering and the graph $ G' $ constructed from $ G $ as in Theorem~\ref{theorem4}.}
	\label{fig2}
\end{figure}

In~\cite{corneil1997asteroidal} it is shown that for an AT-free graph $ G $ and an admissible vertex $ s $ the graph $ G' $ obtained by adding a pendant vertex $ v $ to $ s $ is also AT-free and $ v $ is admissible in $ G' $. With this operation we can artificially raise the eccentricity of our starting vertex and generalise Lemma~\ref{lemma1} to all AT-free graphs.

\begin{theorem}\label{theorem4}
	Let $ G $ be a connected AT-free graph. For every admissible vertex $ s $ there is a vertex ordering $ \tau $ beginning in $ s $ that is both AT-free and a monotone dominating pair ordering.
\end{theorem}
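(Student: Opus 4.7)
The plan is to lift Lemma~\ref{lemma1} to every prefix of the ordering by extending $G$ with auxiliary pendants. Specifically, I would attach a length-three path of new vertices $v_1 - v_2 - v_3$ to $s$, producing a graph $G'$. Iterating the result of \cite{corneil1997asteroidal} on admissibility-preserving pendants, $G'$ is AT-free and $v_1$ is admissible in $G'$. Running $\text{BFS}^{\conv}(G',v_1)$ then yields an ordering $\tau'$ which, since BFS must traverse the pendant path in sequence, has the form $\tau' = (v_1,v_2,v_3,s=u_1,u_2,\ldots,u_n)$. I would define $\tau := (u_1,\ldots,u_n)$ to be the restriction of $\tau'$ to $V(G)$; by construction it begins at $s$.

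To verify AT-freeness of $\tau$ on $G$, observe that every chordless path of $G$ is also chordless in $G'$, so $\mathcal{B}_D(G) \subseteq \mathcal{B}_D(G')$, and the AT-free condition satisfied by $\tau'$ with respect to $\mathcal{B}_D(G')$ restricts to the AT-free condition for $\tau$ with respect to $\mathcal{B}_D(G)$. For the monotone dominating pair property, I would fix $i \in \{1,\ldots,n\}$, set $V_i := \{u_1,\ldots,u_i\}$ and $H_i := G'[\{v_1,v_2,v_3\} \cup V_i]$, and apply (the argument of the proof of) Lemma~\ref{lemma1} to $H_i$ with start $v_1$ and end $u_i$. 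The hypotheses are almost immediate: $H_i$ is AT-free as an induced subgraph of $G'$; $v_1$ is admissible in $H_i$ by iterating the cited pendant result on $G[V_i]$, in which $s$ is admissible by inheritance of admissibility under induced subgraphs; $\ecc_{H_i}(v_1) \ge 3 > 2$; and the restriction of $\tau'$ to $V(H_i)$ is AT-free by the same containment argument as above. Lemma~\ref{lemma1} then yields that $v_1$ and $u_i$ form a dominating pair in $H_i$, and a short path-lifting argument transfers this to $G[V_i]$: every induced $s$-$u_i$-path in $G[V_i]$ extends by prepending $v_1-v_2-v_3$ to a dominating $v_1$-$u_i$-path in $H_i$, and since no $v_j$ is adjacent to any vertex of $V(G) \setminus \{s\}$, domination of $V_i$ must already be provided by the original path.

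The main obstacle is verifying the eccentricity hypothesis of Lemma~\ref{lemma1}, namely that $u_i$ attains the value $\ecc_{H_i}(v_1)$ from $v_1$ in $H_i$; this is not a priori clear because the tie-breaks of $\text{BFS}^{\conv}$ in $G'$ and in $H_i$ need not coincide. To handle it I would note that $\dist_{G'}(v_1,u_j) = 3 + \dist_G(s,u_j)$ for every $u_j \in V(G)$, and that the only edge between the pendant path and $V(G)$ is $v_3 s$, so for every $j \ge 2$ the BFS-tree parent of $u_j$ lies in $V(G)$. Hence $u_1,\ldots,u_n$ is a BFS order of $G$ from $s$ and $V_i$ is a BFS-prefix, so the standard fact that BFS-tree shortest paths stay inside a BFS-prefix gives $\dist_{G[V_i]}(s,u_j) = \dist_G(s,u_j)$ for every $u_j \in V_i$. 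Since $u_i$ is the last vertex of $V_i$ in $\tau'$, it maximises $\dist_G(s,\cdot)$ on $V_i$ and therefore also $\dist_{G[V_i]}(s,\cdot)$, giving $\dist_{H_i}(v_1,u_i) = 3 + \dist_{G[V_i]}(s,u_i) = \ecc_{H_i}(v_1)$, as required.
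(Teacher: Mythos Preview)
Your proposal is correct and follows the same construction as the paper: attach a length-three pendant path to $s$, run $\text{BFS}^{\conv}$ on $G'$ from the new endpoint $v_1$, and restrict to $V(G)$. The paper's own proof is terse at the final step (it simply asserts that ``with Lemma~\ref{lemma1} it is easy to see'' that $\tau$ is a monotone dominating pair order), whereas you carefully verify the hypotheses needed to invoke the \emph{argument} of Lemma~\ref{lemma1} on each prefix $H_i$---in particular that $u_i$ is eccentric from $v_1$ in $H_i$, which is indeed the nontrivial point and which the paper leaves implicit. Your distance-preservation argument via BFS-prefixes and your path-lifting from $G[V_i]$ to $H_i$ are both sound, so this is a fully fleshed-out version of the paper's intended proof rather than a different route.
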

\begin{proof}
	We construct an auxiliary graph by adding a three vertex path to $ s $ in the following way: $ G'=(V+\{v_1,v_2,v_3\}, E +\{v_1v_2,v_2v_3,v_3s\}) $. As $ s $ is admissible, the graph $ G' $ is again AT-free and $ v_1 $ is admissible in $ G' $ with $ \ecc_{G'}(v_1) > 2 $. The order $ \tau' = (v_1,v_2,v_3,w_1, \ldots , w_n)$ that is generated by $ \text{BFS}^{\conv}(G',v_1) $ is an AT-free order and with Lemma~\ref{lemma1} it is easy to see that $ \tau = (w_1, \ldots ,w_n) $ is a monotone dominating pair order for $ G $.
	
	\qed
\end{proof}

\section{AT-free Orders in Claw-free AT-free graphs}

After having established the existence of AT-free BFS orders and a polynomial-time algorithm for their computation, we are interested in finding a simple linear time algorithm. In many graph classes, forbidding induced claw-graphs yields strong structural properties for BFS searches. For example, in~\cite{corneil2004simple} and in~\cite{meister2005recognition} the authors use these structural properties to generate unit interval respectively minimal triangulation orderings. As in the papers cited above, we will use successive applications of BFS as well as LBFS.

\begin{lemma}\label{lemma4}
	Let $ G $ be claw-free and AT-free. Then the last vertex of a BFS is admissible.
\end{lemma}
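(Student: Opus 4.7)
The plan is to argue by contradiction. Suppose the last BFS vertex $t$ is not admissible; then there exist $x, z \in V$ with $(x, t, z) \in \mathcal{B}_D(G)$, witnessed by a chordless $x$-$t$-path $P_1$ avoiding $z$ and a chordless $t$-$z$-path $P_2$ avoiding $x$. Because each of these paths must avoid its opposite endpoint, $\{x, t, z\}$ is an independent triple; hence if some chordless $x$-$z$-path could additionally avoid $t$, then $\{x, t, z\}$ would form an asteroidal triple, contradicting AT-freeness. Consequently every chordless $x$-$z$-path in $G$ must be intercepted by $t$.

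The strategy is to use the BFS layering from $s$ to produce a chordless $x$-$z$-path that $t$ fails to intercept. Set $k = \ecc_G(s) = \dist_G(s, t)$, so $N(t) \subseteq L_G^{k-1}(s) \cup L_G^k(s)$. Pick any shortest $s$-$x$-path $R_x$ and any shortest $s$-$z$-path $R_z$, concatenate them at $s$ to form an $x$-$z$-walk, and extract a chordless $x$-$z$-path $Q$ from that walk; then $V(Q) \subseteq V(R_x) \cup V(R_z)$. Along any shortest path from $s$ there is exactly one vertex per distance layer, so the only vertices of $V(R_x) \cup V(R_z)$ in $L_G^k(s)$ are $x$ and $z$ themselves, neither of which is a neighbour of $t$ by independence. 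Hence any interception of $Q$ by $t$ must happen in $L_G^{k-1}(s)$. If both $\dist_G(s,x) \le k-2$ and $\dist_G(s,z) \le k-2$, then $V(Q) \subseteq L_G^0(s) \cup \ldots \cup L_G^{k-2}(s)$, which contains no neighbour of $t$, and we immediately reach the contradictory asteroidal triple.

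Up to exchanging $x$ and $z$, we may therefore assume $x \in L_G^k(s)$. Since $R_x$ is free to pass through any $x^* \in N(x) \cap L_G^{k-1}(s)$, the assumption that every such $Q$ is intercepted by $t$ forces $N(x) \cap L_G^{k-1}(s) \subseteq N(t)$, and the analogous inclusion for $z$ in the sub-case $z \in L_G^k(s)$. Thus every BFS-parent of $x$, and of $z$ when applicable, already lies in $N(t)$.

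The crux of the proof, and the place where claw-freeness is decisive, is to derive a contradiction from this forced adjacency. Let $a$ be the unique neighbour of $t$ on $P_1$ and $b$ the unique neighbour of $t$ on $P_2$; fix any $x^* \in N(x) \cap L_G^{k-1}(s) \subseteq N(t)$. With $a, x^* \in N(t)$ and a further witness $y \in N(t)$ (for instance, a BFS-parent of $t$), the claw-free hypothesis forbids $\{a, x^*, y\}$ from being an independent triple, so at least one edge must exist among the pairs $\{ax^*, ay, x^*y\}$. A case analysis on which edge this is should either let me reroute $P_1$ through $x^*$ so that, after gluing with $P_2$, a chordless $x$-$z$-path emerges whose only possible $t$-neighbours were already excluded above (producing the asteroidal triple), or collapse $P_1, P_2$ onto a short witness yielding the triple directly. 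The main obstacle will be verifying chordlessness of the rerouted path and handling the degenerate case $a = b$, where $P_1$ and $P_2$ share their $t$-neighbour; here one has to combine the independence of $\{x, t, z\}$ with the claw-free control on $N(t)$ to force the contradiction by hand.
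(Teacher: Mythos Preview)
Your argument is on the right track through the middle of the second paragraph: using shortest paths from $s$ you correctly force (after a small repair) a common neighbour $x^*$ of $x$ and $t$ in $L_G^{k-1}(s)$. The repair is that in the sub-case $x,z\in L_G^k(s)$ you do \emph{not} get both inclusions $N(x)\cap L_G^{k-1}(s)\subseteq N(t)$ and $N(z)\cap L_G^{k-1}(s)\subseteq N(t)$; varying $R_x$ and $R_z$ independently only shows that at least one of them holds, which is still enough to produce a common neighbour.

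The genuine gap is what you do afterwards. Building a claw with base $t$ out of $a$, $x^*$ and a BFS-parent $y$ leads nowhere: none of the pairs $ax^*$, $ay$, $x^*y$ is forced to be a non-edge, and even if claw-freeness hands you one edge among them, the promised ``rerouting'' of $P_1$ is not a step you have actually carried out (and the chordlessness issues you flag are real). The paper's finish, by contrast, is a one-liner once you have the common neighbour: form the claw at $x^*$, not at $t$. Since $x,t\in L_G^k(s)$ are non-adjacent and both adjacent to $x^*\in L_G^{k-1}(s)$, pick any $d\in N(x^*)\cap L_G^{k-2}(s)$; then $d$ is non-adjacent to both $x$ and $t$ by the layer structure, so $\{x,t,d\}$ is independent in $N(x^*)$ --- a claw. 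In the degenerate case $k=1$ one has $x^*=s$, whence $z$ is also adjacent to $s$ and $\{x,z,t\}$ is an independent triple in $N(s)$, again a claw. This replaces your entire last paragraph.
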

\begin{proof}
	Let $ s $ be the first and $z$ the last vertex of the BFS and let $ k := \dist_G(s,z) $.  Suppose there are $ a,b \in V $ such that $ (a,z,b) \in \mathcal{B}_D(G) $. As $ G $ is AT-free, at least one of $a$ or $b$ must be in the last layer $ L_G^{k}(s) $ of the BFS, w.l.o.g. this is $ a $. If $ \dist_G(s,b) < \dist_G(s,z) $, then $ N_s^{k-1}(a) \subseteq N_s^{k-1}(z)$, as otherwise there is a $ z $-avoiding $ a $-$ b $-path. If $ \dist_G(s,b) = \dist_G(s,a) = \dist_G(s,z)$, then either $ N_s^{k-1}(a) \subseteq N_s^{k-1}(z)$ or $ N_s^{k-1}(b) \subseteq N_s^{k-1}(z)$, as $ G $ is AT-free, and without loss of generality we can assume this to be true for $ a $. Therefore, $ a $ and $ z $ have a common neighbour $ c $ in $ L_G^{k-1}(s) $. If $ c $ is not the start vertex of the BFS, then $ c $ has a neighbour $ d $ in $ L_G^{k-2} $ and $ a,z,c,d $ form a claw.
	If $ c $ is the start vertex, then $ b $ must also be adjacent to $ c $ and $ a,b,c,d $ form a claw.
	
	\qed
\end{proof}

\begin{lemma}\label{lemma5}
	Let $ G $ be a claw-free, AT-free graph and let $ s \in V$ be admissible in $ G $ and $ t $ eccentric with respect to $ s $. Then all but the first distance layers of $ s $, i.e., $ L_G^{0}(s),L_G^{2}(s), \ldots , L_G^{k}(s) $, with $ k=\ecc_G(s) $, are cliques and $ s $ and $ t $ form a dominating pair.
\end{lemma}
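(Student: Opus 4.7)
The plan is to establish the two conclusions in order: first the clique structure of $L^j_G(s)$ for $j \geq 2$, and then deduce the dominating-pair property as a consequence of that structure together with the fact that any $s$-$t$-path must traverse every distance layer. The layer $L^0_G(s) = \{s\}$ is trivially a clique and needs no work.

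For the clique property, fix $j \geq 2$ and suppose for contradiction that $a, b \in L^j_G(s)$ are non-adjacent. I split into two cases based on $N_s^{j-1}(a) \cap N_s^{j-1}(b)$. If there is a common neighbour $c$ in $L^{j-1}_G(s)$, then since $j - 1 \geq 1$ the vertex $c$ has some neighbour $d \in L^{j-2}_G(s)$; the vertices $a, b, d$ lie in layers that are pairwise at distance $\geq 2$ from each other apart from the pair $a,b$ which is non-edge by assumption, so $\{c; a,b,d\}$ induces a $K_{1,3}$, contradicting claw-freeness. In the remaining case $N_s^{j-1}(a)$ and $N_s^{j-1}(b)$ are disjoint; then any shortest $s$-$a$-path has its unique $L^{j-1}$-vertex outside $N(b)$, all of its earlier vertices lie in layers $L^0 \cup \cdots \cup L^{j-2}$ and so are non-adjacent to $b$, and its last vertex $a$ is not adjacent to $b$ by assumption. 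Thus a shortest (hence chordless) $s$-$a$-path avoids $b$, and symmetrically a shortest $s$-$b$-path avoids $a$, giving $(a,s,b) \in \mathcal{B}_D(G)$ and contradicting admissibility of $s$. The main obstacle is this second case: claw-freeness alone is not enough, and the admissibility hypothesis on $s$ has to carry the argument.

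For the dominating-pair claim, let $P$ be any $s$-$t$-path and consider the sequence of distances to $s$ along $P$. Consecutive vertices of $P$ are adjacent, hence their distances to $s$ differ by at most one, and the sequence starts at $0$ and ends at $k = \ecc_G(s)$. By a discrete intermediate-value argument $P$ therefore contains a vertex in every layer $L^j_G(s)$ for $j = 0, 1, \ldots, k$. For any $v \in V$, if $v \in L^0 \cup L^1$ then $v$ is adjacent to $s \in P$ (or equal to $s$, which is dominated by its successor on $P$); otherwise $v \in L^j_G(s)$ for some $j \geq 2$, and the clique property yields that $v$ coincides with or is adjacent to the vertex of $P$ in $L^j_G(s)$. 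Hence every $s$-$t$-path is dominating, i.e.\ $s$ and $t$ form a dominating pair.
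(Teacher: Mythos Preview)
Your proof is correct and follows essentially the same approach as the paper. The only cosmetic difference is that the paper, rather than splitting on whether $a$ and $b$ share a neighbour in $L^{j-1}_G(s)$, invokes admissibility of $s$ up front to conclude (without loss of generality) that $N_s^{j-1}(a) \subseteq N_s^{j-1}(b)$---which is precisely your Case~2 argument repackaged---thereby guaranteeing a common neighbour and reducing everything to your Case~1 claw; the dominating-pair part is identical.
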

\begin{proof}
	For $ L_G^0(s) $ this is obvious. Let $i \geq2$ and suppose there are $ a,b \in L_G^i(s) $ with $ ab \notin E $. As $ s $ is admissible, without loss of generality $ N_s^{i-1}(a) \subseteq N_s^{i-1}(b) $. Therefore $ a $ and $ b $ have a common neighbour $ c \in L_G^{i-1} $. This $ c $ in turn has a neighbour $ d \in L_G^{i-2} $ and $ a,b,c,d $ form a claw, which is a contradiction to the assumption.
	
	As any path $ P $ between $ s $ and $ t $ has one vertex from each distance layer $ L_G^i(s) $ and $ s $ is adjacent to all vertices in $ L_G^1(s) $ they must form a dominating pair.
	
	\qed
\end{proof}

\begin{theorem}\label{theorem1}
	Let $ G $ be an AT-free, claw-free graph. Then a BFS starting in an admissible vertex yields an AT-free order that is a monotone dominating pair order.
\end{theorem}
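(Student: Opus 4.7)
The plan is to exploit the rigid layered structure of $G$ that comes from Lemma~\ref{lemma5}. Write $L^j := L_G^j(s)$; then $L^j$ is a clique for every $j \geq 2$, and because $\tau$ is a BFS order, all of $L^j$ precedes all of $L^{j+1}$ in $\tau$, while edges of $G$ only connect vertices whose layer indices differ by at most one. I prove the monotone dominating pair property first, and then the AT-free property.

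For the monotone dominating pair claim, fix $i$, let $k = \dist_G(s, v_i)$, and set $G_i := G[v_1,\ldots,v_i]$. Any $s$-$v_i$-path $P$ in $G_i$ must contain a vertex from each of $L^0, L^1, \ldots, L^k$, since consecutive vertices on $P$ lie in the same or adjacent layers. The vertex $s$ on $P$ dominates $L^0 \cup L^1$, and for $j \geq 2$ the vertex of $P$ in $L^j$ dominates $L^j \cap V(G_i)$ by cliqueness. Since $V(G_i) \subseteq \bigcup_{j \leq k} L^j$, the path $P$ dominates $G_i$, so $s$ and $v_i$ form a dominating pair of $G_i$.

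For the AT-free property, I argue by contradiction: take the smallest $i$ such that $C_i := \{v_1, \ldots, v_i\}$ fails to be convex in $(V, \mathcal{C}_{\mathcal{B}_D})$. By minimality and convexity of $C_{i-1}$, there exist $a \in C_{i-1}$ and $p \notin C_i$ with $(a, p, v_i) \in \mathcal{B}_D(G)$; in particular $\{a, p, v_i\}$ is independent. Set $c = v_i$ and $k = \dist_G(s, c)$. If $k = 0$ then $c = s$ is admissible, contradicting the existence of the betweenness triple. If $p \in L^k$ then for $k \geq 2$ the clique $L^k$ forces the forbidden edge $pc$; and for $k = 1$ one has $a \neq s$ (else $a \sim c$), so $\{a,p,c\} \subseteq N(s)$ is an independent triple and $\{s\} \cup \{a,p,c\}$ is an induced claw, contradicting claw-freeness. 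Hence in all remaining cases $p \in L^{\geq k+1}$.

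Now AT-freeness of $G$ forces one of the three pairs in $\{a, p, c\}$ to have all of its paths intercepted by the third, and the chordless avoiding paths supplied by $(a,p,c) \in \mathcal{B}_D$ exclude the pairs $\{a,p\}$ and $\{p,c\}$, leaving $p$ to intercept every $a$-$c$-path. I construct one such path $R$ whose internal vertices all lie in $L^{\leq k-1}$: descend from $a$ to $s$ along BFS parents, climb from $s$ up to a BFS parent of $c$ in $L^{k-1}$, and append the edge to $c$. Since $p \in L^{\geq k+1}$, every neighbour of $p$ lies in $L^{\geq k}$, and this set meets $V(R)$ only in $\{a, c\}$; but independence of the triple gives $p \not\sim a$ and $p \not\sim c$, so $p$ has no neighbour on $R$, contradicting interception. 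The main obstacle is precisely this last step: one must use Lemma~\ref{lemma5} in two complementary ways at once, first to push $p$ strictly above layer $k$, and second to keep the interior of $R$ strictly below layer $k$, so that the BFS layering makes interception impossible.
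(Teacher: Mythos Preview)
Your argument is correct, with two cosmetic points. First, the $k=0$ case is handled by the wrong reason: admissibility of $s$ only rules out triples with $s$ in the \emph{middle} slot, and in $(a,p,s)$ the vertex $s$ sits on the outside. The case is simply vacuous, since $C_1=\{s\}$ is trivially convex and hence the minimal bad index satisfies $i\geq 2$, forcing $c=v_i\neq s$. Second, when you conclude ``in all remaining cases $p\in L^{\geq k+1}$'' you are silently using that $p\notin C_i$ already forces $p\in L^{\geq k}$, because BFS exhausts $L^{<k}$ before reaching $c\in L^k$; this is worth one explicit sentence.

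The paper argues differently. It works with the triple $(a,z,b)\in\mathcal{B}_D$ where $a,b\prec_\tau z$ (so its $z$ is your $p$), first excludes all three sharing a layer via a claw, and then asserts that AT-freeness forces one of $a,b$ into $z$'s layer; Lemma~\ref{lemma5} then pins that common layer down as $L^1$, whence the third vertex is $s$ and adjacency gives the contradiction. Your route instead anchors the analysis at $c=v_i$, shows $p$ must lie strictly above $c$'s layer, and exhibits an explicit $p$-avoiding $a$--$c$ walk through $s$ whose interior stays strictly below layer $k$. Both proofs ultimately rest on the clique layers of Lemma~\ref{lemma5}; the paper's is terser but its ``same layer'' step is only justified by the phrase ``as $G$ is AT-free'' and in fact requires an avoiding-walk argument of the same flavour as yours, so your version is arguably the more self-contained of the two.
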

\begin{proof}
	Let $ \tau $ be such a BFS on $ G $ starting in an admissible vertex  $ s $. Suppose $ (a,z,b) \in \mathcal{B}_D(G) $ and $ a, b \prec_{\tau} z$. We can assume that $ a $, $ b $ and $ z $ do not have the same distance to $ s $ (otherwise we can construct a claw as above). As $ G $ is AT-free, on the other hand, at least one of $ a $ or $ b $ must be in the same layer as $ z $. W.l.o.g. we can assume that $ b $ and $ z $ are in the same layer $ L_G^i(s) $ and $ a $ is in layer $ L_G^j(s) $ with $ j < i $. As $ b $ and $ z $ are independent of each other, they must be in the first layer of the BFS. As $ a $ cannot be the start vertex (it is not adjacent to the other two), this is a contradiction. Lemma~\ref{lemma5} states that $ \tau $ must be a monotone dominating pair order.
	
	\qed
\end{proof}

\begin{lemma}\label{lemma3}
	Let $ G=(V,E) $ be a connected graph with a dominating pair $ s $ and $ t $. Let $ u $ and $ v $ be two vertices with $ uv \notin E $ and $ \dist_G(s,u) < \dist_G(s,v) $. Then $ \dist_G(t,u) \geq \dist_G(t,v) $
\end{lemma}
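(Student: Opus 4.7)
The plan is to argue by contradiction: assume $\dist_G(t,u) < \dist_G(t,v)$ and build an $s$-$t$ path that avoids $v$, contradicting the fact that $\{s,t\}$ is a dominating pair (since $v$ must then have a neighbour on the path, which in turn gives a shortcut contradicting one of the distance hypotheses).

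More concretely, I would first fix a shortest $s$-$u$-path $P_1$ and a shortest $u$-$t$-path $P_2$ and consider the concatenated walk $W := P_1 \cdot P_2$. The first subclaim is that $v \notin V(W)$. Indeed, if $v$ appeared on $P_1$ then $\dist_G(s,v) < \dist_G(s,u)$ (since $v \neq u$ as $uv \notin E$), contradicting the hypothesis; and if $v$ appeared on $P_2$ then $\dist_G(v,t) \leq \dist_G(u,t)$, contradicting our assumption. From $W$ I would then extract a simple $s$-$t$-path $P$ by the standard shortcutting argument; this $P$ also avoids $v$.

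Since $\{s,t\}$ is a dominating pair, $P$ must dominate $V$, so $v$ has some neighbour $x$ on $P$. Because $uv \notin E$, we have $x \neq u$, so $x \in V(P_1) \setminus \{u\}$ or $x \in V(P_2) \setminus \{u\}$. In the first case, taking the $s$-$x$-prefix of $P_1$ followed by the edge $xv$ yields an $s$-$v$-path of length at most $\dist_G(s,u)$, contradicting $\dist_G(s,u) < \dist_G(s,v)$. In the second case, the edge $vx$ followed by the $x$-$t$-suffix of $P_2$ yields a $v$-$t$-path of length at most $\dist_G(u,t)$, contradicting the assumption $\dist_G(t,u) < \dist_G(t,v)$.

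The proof is essentially bookkeeping once one has the idea of gluing the two shortest paths through $u$; the only mildly delicate point is being careful with the edge cases where $u$ or $v$ coincides with $s$ or $t$, but in each of those cases the same distance inequalities still rule out $v$ lying on $W$ and still force the final contradiction, so no real obstacle arises.
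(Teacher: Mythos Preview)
Your argument is correct. The paper actually states Lemma~\ref{lemma3} without proof, so there is nothing to compare against; the route you take---concatenating shortest $s$--$u$ and $u$--$t$ paths and using the dominating-pair property to force a neighbour of $v$ onto one of the two halves, yielding a distance shortcut---is the natural one and all the inequalities check out. One cosmetic simplification: you do not really need to extract a simple path $P$ from $W$ before invoking the dominating-pair property; you can instead observe directly that \emph{no} vertex of $W$ is adjacent to $v$ (your Case~1/Case~2 analysis already shows this), and then any $s$--$t$ path contained in $W$ fails to dominate $v$. But this is a matter of taste, not a gap.
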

%

\begin{corollary}\label{corollary1}
	Let $G$ be a claw-free AT-free graph. Then $G$ has a bilateral AT-free ordering and this order can be found in linear time.
\end{corollary}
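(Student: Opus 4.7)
The plan is to find an admissible dominating pair $(s,t)$ via two BFS sweeps and then to produce the bilateral order by a BFS from $s$ whose within-layer ties are broken using the distances from $t$; the AT-free side will come from Theorem~\ref{theorem1} and the bilateral side from Lemma~\ref{lemma3}.

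Algorithmically, a first BFS from an arbitrary vertex has an admissible last vertex $s$ by Lemma~\ref{lemma4}. A second BFS from $s$ returns $\dist_G(s,\cdot)$ and, by Lemmas~\ref{lemma4} and~\ref{lemma5}, an admissible last vertex $t$ with $(s,t)$ a dominating pair. A third BFS from $t$ returns $\dist_G(t,\cdot)$. I then let $\tau$ be the BFS from $s$ whose within-layer ties favour the vertex farthest from $t$, which is equivalent to sorting $V$ lexicographically by the key $(\dist_G(s,v),-\dist_G(t,v))$ and can be realised in $O(n+m)$ time by bucket sort. Theorem~\ref{theorem1} already gives that $\tau$ is an AT-free order, so only bilaterality remains.

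For bilaterality I would take any $(x,y,z)\in\mathcal{B}_D(G)$ and show that $y$ lies between $x$ and $z$ in $\tau$. Unfolding the definition of the domination betweenness forces $\{x,y,z\}$ to be an independent set, after which Lemma~\ref{lemma5} rules out two of them sharing any layer $L_G^i(s)$ with $i\ge 2$, and claw-freeness at $s$ rules out all three lying in $L_G^1(s)$. In the remaining configurations, Lemma~\ref{lemma3} makes $\dist_G(s,\cdot)$ and $\dist_G(t,\cdot)$ anti-monotone on non-adjacent pairs, so that the lex key above agrees with the betweenness on the triple and $y$ falls into the middle slot.

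The principal obstacle I foresee is ruling out the configuration in which $y$ has strictly smaller $\dist_G(s,\cdot)$ than both outer vertices, since Lemma~\ref{lemma3} yields no strict inequality in this direction. I plan to dispose of this case by combining the two chordless paths supplied by $(x,y,z)\in\mathcal{B}_D(G)$ with Lemma~\ref{lemma5} and the admissibility of $s$, either deriving $(x,s,z)\in\mathcal{B}_D(G)$ and contradicting the admissibility of $s$, or exhibiting an induced claw. The symmetric borderline configuration in which $y$ shares $L_G^1(s)$ with a single outer vertex is then handled by the $\dist_G(t,\cdot)$ tie-break, whose correctness follows by running the same argument with the roles of $s$ and $t$ interchanged.
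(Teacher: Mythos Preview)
Your plan is sound and is essentially the route the paper sets up: the sequence Lemma~\ref{lemma4} $\to$ Lemma~\ref{lemma5} $\to$ Theorem~\ref{theorem1} $\to$ Lemma~\ref{lemma3} is exactly the scaffolding provided for this corollary, and your three-sweep scheme with a $\dist_G(t,\cdot)$ tie-break inside $L_G^1(s)$ is the intended use of Lemma~\ref{lemma3}.

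Two small points worth tightening. First, your claimed equivalence ``BFS from $s$ with farthest-from-$t$ tie-break $=$ lex sort by $(\dist_G(s,\cdot),-\dist_G(t,\cdot))$'' need not hold inside layers $L_G^i(s)$ with $i\ge 2$, since the queue order there is constrained by the $L_G^{i-1}(s)$--$L_G^i(s)$ adjacencies. This is harmless: those layers are cliques by Lemma~\ref{lemma5}, so at most one vertex of any betweenness triple lies there and the within-layer order is irrelevant for bilaterality; alternatively, the lex sort itself is already a bilateral AT-free order (your case analysis proves AT-freeness directly), and the corollary does not require a BFS order. Second, your ``principal obstacle'' (the case $\dist_G(s,y)<\dist_G(s,x),\dist_G(s,z)$) is most cleanly dispatched not via $(x,s,z)\in\mathcal B_D$ but by observing that the $x$-avoiding $y$--$z$ path must cross the clique layer $L_G^{\min(\dist_G(s,x),\dist_G(s,z))}(s)$, so the outer vertex in that layer intercepts it; this is the argument you should actually write down. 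The symmetric tie-break case then goes through exactly as you say, since $t$ is also admissible and Lemma~\ref{lemma5} applies to it as well.
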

%
%
%
%
%
%

In the proof of Theorem~\ref{theorem1} we can see that the main obstacles are triples of vertices $ a,b,z \in V(G) $ with $ (a,z,b) \in \mathcal{B}_D(G) $ that form the prongs of a claw. This justifies the following:

\begin{definition}
	Let $G$ be a graph and let $a,b,z,c \in V$ induce a claw with base $c$. We will call such a claw a \emph{bad claw}\index{bad claw}, if $(a,z,b) \in \mathcal{B}_D(G)$.
\end{definition}

It seems reasonable to expect that by forbidding such bad claws we will be able to get similar results to the ones above. On the other hand, there are examples of AT-free bad-claw-free graphs for which the above procedure does not yield either an AT-free order nor a bilateral AT-free ordering (see Figure~\ref{fig4}). In particular, Lemma~\ref{lemma4} does not hold in general for these graphs. Therefore, we will use LBFS which guarantees us an admissible vertex as its end-vertex.

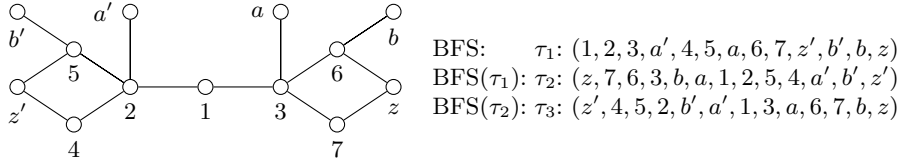
\begin{figure}
	\begin{minipage}{0.45\textwidth}
		\begin{tikzpicture}[vertex/.style={inner sep=2pt,draw,circle},auto]
		
		\node[vertex,label=below:$1$] (1) at (2.5,0){};
		\node[vertex,label=below:$2$] (2) at (1.5,0){};
		\node[vertex,label=below:$3$] (3) at (3.5,0){};
		\node[vertex,label=below:$4$] (4) at (0.75,-0.5){};
		\node[vertex,label=below:$5$] (5) at (0.75,0.5){};
		\node[vertex,label=below:$6$] (6) at (4.25,0.5){};
		\node[vertex,label=below:$7$] (7) at (4.25,-0.5){};
		\node[vertex,label=left:$a$] (a) at (3.5,1){};
		\node[vertex,label=left:$a'$] (a') at (1.5,1){};
		\node[vertex,label=below:$b$] (b) at (5,1){};
		\node[vertex,label=below:$b'$] (b') at (0,1){};
		\node[vertex,label=below:$z$] (z) at (5,0){};
		\node[vertex,label=below:$z'$] (z') at (0,0){};
		
		\draw[] (b')--(5)--(z')--(4)--(2)--(5)--(2)--(a')--(2)--(1)--(3)--(a)--(3)--(6)--(b)--(6)--(z)--(7)--(3);
		
		\end{tikzpicture}
	\end{minipage}	
		\hspace{0pt}
	\begin{minipage}{0.52\textwidth}
		\begin{tabular}{ll}
		BFS: & $ \tau_1 $: $(1,2,3, a', 4, 5, a, 6, 7, z', b', b, z)$ \\
		BFS($ \tau_1 $): & $ \tau_2 $: $ (z, 7, 6, 3, b, a, 1, 2, 5, 4, a', b', z') $ \\
		BFS($\tau_2 $): & $ \tau_3 $: $ (z', 4, 5, 2, b', a', 1, 3, a, 6, 7, b, z) $ 		\end{tabular}
	\end{minipage}		

	\caption{A bad-claw-free graph for which BFS does not yield an AT-free order}
	\label{fig4}
\end{figure}

\begin{lemma}\cite{corneil1999linear}\label{lemma2}
	Let $ G=(V,E) $ be an AT-free graph and let $ \tau $ be an ordering of $ V $ produced by an LBFS. Then the vertex $ t := \tau(n) $ is admissible in $ G $.
\end{lemma}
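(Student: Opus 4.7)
Suppose for contradiction that $t := \tau(n)$ is not admissible. Then there exist $x, z \in V$ with $(x, t, z) \in \mathcal{B}_D(G)$, witnessed by a chordless $x$-$t$-path $P_1$ avoiding $z$ and a chordless $t$-$z$-path $P_2$ avoiding $x$. Since ``avoids'' forbids the avoided vertex from having any neighbour on the path, all three pairs $xt$, $zt$, $xz$ are non-edges, so $\{x, t, z\}$ is an independent triple; without loss of generality $x \prec_\tau z$ (and both precede $t$, since $t$ is $\tau$-last).

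The strategy is to construct a chordless $x$-$z$-path $P_3$ that avoids $t$: then $\{x, t, z\}$ is asteroidal via $P_1, P_2, P_3$, contradicting the AT-freeness of $G$. The natural tool is the Prior Path Lemma. Applied to the pair $(z, t)$ it yields a path $Q_z$ from the $\tau$-first vertex $w_z$ of $z$'s component in the minimal common slice $\Gamma^\tau_{z,t}$ to $z$, with every vertex except $z$ a non-neighbour of $t$ and $\tau$-earlier than $z$; applied analogously to $(x, t)$ it yields $Q_x$ ending at $x$ with the corresponding properties. The task then reduces to linking $Q_x$ and $Q_z$ by a walk that avoids $N[t]$: once this is accomplished, concatenating $Q_x$, the linking walk, and the reverse of $Q_z$, then shortening, produces the desired chordless $P_3$.

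The main obstacle is proving this connectivity in $G - N[t]$. My plan is to exploit three ingredients together: (i) that the two prior paths emanate from $\tau$-earlier non-neighbours of $t$, (ii) the nesting structure of LBFS slices (so that $\Gamma^\tau_{z,t}$ sits inside $\Gamma^\tau_{x,t}$ whenever $z \in \Gamma^\tau_{x,t}$, with a separate case when it does not), and (iii) the transitivity of $\mathcal{B}_D$ guaranteed by the characterisation theorem. If $N[t]$ were to separate $w_x$ from $w_z$ inside the ambient slice, then a minimal $w_x$-$w_z$-walk hitting a vertex $u \in N(t)$ would, together with $(x, t, z) \in \mathcal{B}_D$ and transitivity, yield a further betweenness relation violating the LBFS tie-breaking rule that deferred $t$ to the last position. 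Hence the linking walk exists, $P_3$ can be extracted as a chordless path, and the asteroidal triple $\{x, t, z\}$ closes the argument; therefore $t$ is admissible.
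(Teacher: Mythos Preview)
The paper does not give its own proof of this lemma: it is quoted from \cite{corneil1999linear} and used as a black box. So there is no ``paper's proof'' to compare against; I can only assess your plan on its merits.

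Your overall shape is the natural one: assume $(x,t,z)\in\mathcal{B}_D(G)$ and manufacture a $t$-avoiding $x$--$z$ path, so that $\{x,t,z\}$ becomes an asteroidal triple. Using the Prior Path Lemma to get $t$-avoiding paths $Q_x$ and $Q_z$ reaching $x$ and $z$ from earlier vertices $w_x,w_z$ is also fine, and you correctly observe that all vertices on these paths (including $w_x,w_z$, since $xt,zt\notin E$) miss $N[t]$.

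The genuine gap is the linking step. You assert that if $N[t]$ separated $w_x$ from $w_z$, then ``a minimal $w_x$--$w_z$ walk hitting a vertex $u\in N(t)$ would, together with $(x,t,z)\in\mathcal{B}_D$ and transitivity, yield a further betweenness relation violating the LBFS tie-breaking rule that deferred $t$ to the last position.'' This is not an argument; it is a hope. Concretely: transitivity of $\mathcal{B}_D$ needs two chained triples $(a,b,c),(b,c,d)$ to produce $(a,b,d)$, and you have not exhibited the second triple, nor said which new betweenness you expect to obtain, nor explained how any betweenness relation contradicts an LBFS tie-break (LBFS labels are determined by adjacencies to already-numbered vertices, not by $\mathcal{B}_D$). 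Without this, nothing prevents the situation in which every $x$--$z$ path passes through $N(t)$; note that this is perfectly compatible with $(x,t,z)\in\mathcal{B}_D$ and with AT-freeness, so you really do need the LBFS structure, and you have not used it beyond invoking the Prior Path Lemma twice.

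What actually makes the argument go through in \cite{corneil1999linear} is a finer slice analysis: since $x\prec_\tau z\prec_\tau t$ and slices are intervals of $\tau$, one has $z\in\Gamma^\tau_{x,t}$ and $\Gamma^\tau_{z,t}\subseteq\Gamma^\tau_{x,t}$, and every vertex outside a slice that is adjacent to some vertex of the slice is adjacent to all of them. This common-neighbour property, iterated along a chain of nested slices, is what lets one route from $x$ to $z$ while dodging $N(t)$. Your plan does not invoke this property, and the substitute you propose (transitivity of $\mathcal{B}_D$ plus an unspecified tie-break contradiction) does not do the job.
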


In fact, the properties of LBFS even make up for the absence of the strong structural property of Lemma~\ref{lemma5} and we can prove analogues to both Theorem~\ref{theorem1} and Corollary~\ref{corollary1}.

\begin{theorem}
	Let $ G $ be AT-free and bad-claw-free. Then an LBFS starting in an admissible vertex yields an AT-free order that is a monotone dominating pair order. 
\end{theorem}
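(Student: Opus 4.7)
Since $s$ is admissible, Theorem~\ref{theorem3} already yields that any LBFS from $s$ is a monotone dominating pair order, so only the AT-free half of the statement requires genuinely new work. I would proceed by contradiction: assume $(a,z,b) \in \mathcal{B}_D(G)$ with $a, b \prec_{\tau} z$. The first routine step is to derive that $\{a,b,z\}$ is independent from the two chordless paths witnessing $(a,z,b) \in \mathcal{B}_D(G)$: the chordless $a$-$z$-path avoided by $b$ forbids $b$ from being adjacent to $a$ or $z$, and symmetrically the chordless $b$-$z$-path avoided by $a$ excludes $a$ from $N(b) \cup N(z)$.

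The strategy is then to exhibit a common neighbour $c$ of $a,b,z$: together with the independence of $\{a,b,z\}$ this gives a claw, and since $(a,z,b) \in \mathcal{B}_D(G)$ that claw is bad, contradicting the hypothesis on $G$. I would locate $c$ via the LBFS slice structure. Because $a$ and $z$ both belong to $\Gamma^{\tau}_{a,z}$, they carry identical labels at the step when $a$ is chosen, so their previously picked neighbourhoods coincide; provided $a \neq s$ (ensured because LBFS always expands from the already picked set), the $\tau$-first shared neighbour is a common neighbour of $a$ and $z$. The analogous argument on $\Gamma^{\tau}_{b,z}$ supplies a common neighbour of $b$ and $z$.

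The crux, and the main obstacle, is upgrading these two separate common neighbours into a single vertex adjacent to all three of $a,b,z$. When $b$ already lies in $\Gamma^{\tau}_{a,z}$ this is immediate, since then $a$, $b$, $z$ share labels at the step $a$ is chosen and the earliest common picked neighbour works. Otherwise $b$ has been cleaved off from $\{a,z\}$ at an earlier LBFS step by the pick of some vertex adjacent to exactly one side. In this case I expect the argument to combine the Prior Path Lemma applied to both $\Gamma^{\tau}_{a,z}$ and $\Gamma^{\tau}_{b,z}$ with the transitivity of $\mathcal{B}_D(G)$ (item (ii) of the characterisation theorem) and the admissibility of $s$ (which prohibits $s$ from appearing as the middle of any betweenness triple), in order to concatenate the two prior paths with the chordless paths witnessing $(a,z,b) \in \mathcal{B}_D(G)$ into either a forbidden triple with $s$ in the middle or a bad claw on a different set of prongs. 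The degenerate starts $a = s$ or $b = s$ are dispatched separately using admissibility in the same spirit.
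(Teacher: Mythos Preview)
Your route diverges from the paper's in a way that leaves the decisive step unproved. The paper does \emph{not} try to manufacture a single common neighbour of $a,b,z$ in all cases. Instead it argues via BFS distance layers from $s$: AT-freeness together with admissibility of $s$ forces (up to symmetry) $\dist_G(s,b)=\dist_G(s,z)=:i$; if also $\dist_G(s,a)=i$ then admissibility of $s$ yields a common neighbour in $L_G^{i-1}(s)$ and hence a bad claw; if $\dist_G(s,a)<i$, the paper does not look for a claw at all but re-uses Theorem~\ref{theorem3} \emph{inside} the AT-free argument: picking the $\tau$-last vertex $x$ on the $b$-avoiding $a$--$z$--path and invoking that $s,x$ dominate $G[v_1,\ldots,x]$ forces $b$ to intercept that path, a contradiction. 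So the monotone dominating pair property is not merely the second conclusion---it is the main engine of the first.

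Your slice-based plan has a genuine gap precisely in the case the paper handles with Theorem~\ref{theorem3}. First, a small inaccuracy: $a$ and $z$ share labels at the moment the first vertex of $\Gamma^{\tau}_{a,z}$ is chosen, not when $a$ is chosen; between those two events vertices inside the slice may already have separated $a$ from $z$. More importantly, when $b\notin\Gamma^{\tau}_{a,z}$ there is no reason to expect any single vertex adjacent to all three of $a,b,z$ (indeed if $a$ sits in an earlier layer than $b,z$ this typically fails), and ``combining Prior Path with transitivity of $\mathcal{B}_D$'' is a hope, not an argument: you have not specified which triples you feed into transitivity, nor why the concatenated walks are avoided by the required vertices. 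The cleanest fix is exactly the paper's: once you have Theorem~\ref{theorem3}, use it as a tool rather than only as a conclusion, and organise the case split by distance layers rather than by slice membership.
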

\begin{proof}
	Let $ \tau $ be an LBFS order starting in an admissible vertex $ s $. Suppose $ (a,z,b) \in \mathcal{B}_D(G) $ and $ a,b \prec_{\tau} z $. Without loss of generality, we see that $ i:=\dist_G(s,b)= \dist_G(s,z) $, as $ G $ is AT-free. For that same reason either $ N_s^{i-1}(b) \subseteq N_s^{i-1}(z) $ or $ N_s^{i-1}(a) \subseteq N_s^{i-1}(z) $ or both.
	
	Now suppose $ \dist_G(s,a)=i $. As $ s $ is admissible, and $ a $, $ b $ and $ z $ are independent, they must have a common neighbour $ c $ with $ \dist_G(s,c)=i-1 $ and therefore $ a $, $ b $ and $ z $ and $ c $ form a bad claw, which is a contradiction.
	
	Therefore, we can assume that $ j:= \dist_G(s,a) < i $. With the above we see that $ N_s^{i-1}(b) \subseteq N_s^{i-1}(z)$ and there is a $ b $-avoiding $ a $-$ z $-path $P$. Let $x$ be the $ \tau $-last vertex of $ P $. As $ b \prec_{\tau} z \preceq_{\tau} x $, due to Theorem~\ref{theorem3} the vertex $ b $ must see every $ s $-$x$-path and thus also every $ x$-$a $-path, which is a contradiction. Thus, every LBFS starting in an admissible vertex yields an AT-free order. 
	
	Finally, Theorem~\ref{theorem3} states that every LBFS order of an AT-free graph starting in an admissible vertex is a monotone dominating pair order.
	
	\qed
\end{proof}

\begin{corollary}
	Let $G$ be an AT-free graph that does not have a bad claw as an induced subgraph. Then $G$ has a bilateral AT-free ordering and such an order can be found in linear time.
\end{corollary}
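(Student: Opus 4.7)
The plan is to imitate Corollary~\ref{corollary1} but replace each BFS sweep by an LBFS (respectively LBFS$^{+}$) sweep. The two ingredients that make this work are Lemma~\ref{lemma2}, which gives admissible vertices for free as the end of any LBFS, and the theorem just proved, which guarantees an AT-free monotone dominating pair order whenever LBFS starts from an admissible vertex.

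Concretely, I would run three sweeps. First, an arbitrary LBFS on $G$; by Lemma~\ref{lemma2} its last vertex $s$ is admissible. Second, LBFS$(G,s)$, producing an order $\tau$ which by the preceding theorem is AT-free and a monotone dominating pair order; writing $t := \tau(n)$, Lemma~\ref{lemma2} gives that $t$ is also admissible, and monotonicity makes $\{s,t\}$ a dominating pair in $G$. Third, LBFS$^{+}(\tau)$ started at $t$, producing the candidate order $\sigma$; since $t$ is admissible, the theorem again tells us $\sigma$ is AT-free. All three sweeps are linear, giving the claimed running time.

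It remains to upgrade $\sigma$ from AT-free to bilateral. By the definition of bilateral AT-free, this amounts to showing that for every $(a,y,c) \in \mathcal{B}_D(G)$ the vertex $y$ is not $\sigma$-maximal among $\{a,y,c\}$ (the fact that $y$ is not $\sigma$-minimal is free, because $\sigma$ is AT-free). Equivalently, we need the reverse of $\sigma$ to be an AT-free order as well. The strategy for this is the standard LBFS$^{+}$ reversal argument: the tie-breaking by $\tau$ in the third sweep forces $s$ (the $\tau$-first vertex) to sink to the end of $\sigma$, and more generally ensures that whenever two vertices lie in a common slice of $\sigma$ the LBFS$^{+}$ rule inverts their $\tau$-order. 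Applying the Prior Path Lemma to a hypothetical bad triple $(a,y,c)$ with $a,c \prec_{\sigma} y$ would then produce, via these inverted slices, a corresponding triple contradicting that $\tau$ is AT-free.

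The main obstacle is precisely this last step: turning a violation of the bilateral property of $\sigma$ into a violation of the AT-free property of $\tau$. I expect to handle it by combining the Prior Path Lemma with the dominating pair structure of $\{s,t\}$, and using Lemma~\ref{lemma3} to translate $\dist_G(t,\cdot)$ into $\dist_G(s,\cdot)$ for the vertices under consideration. Bad-claw-freeness will be invoked, as in the theorem above, to rule out the configuration where the two witnesses of $(a,y,c)\in \mathcal{B}_D(G)$ share a common neighbour that would complete a claw with the triple. Once this is pinned down, the corollary follows.
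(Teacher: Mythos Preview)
Your three-sweep LBFS plan is the intended route (the paper leaves the proof implicit, but this corollary is set up as the exact LBFS analogue of Corollary~\ref{corollary1}). There is, however, a systematic swap of ``maximal'' and ``minimal'' running through your bilateral step. By definition, an AT-free order guarantees that for every $(a,y,c)\in\mathcal{B}_D(G)$ one has $y\prec_\sigma a$ or $y\prec_\sigma c$, i.e.\ $y$ is not $\sigma$-\emph{maximal}; that is what you get for free from $\sigma$ being AT-free. What is \emph{not} free, and what you must actually establish for bilaterality, is that $y$ is not $\sigma$-\emph{minimal}---equivalently, that the reverse of $\sigma$ is AT-free. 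Hence the hypothetical bad configuration to analyse is $y\prec_\sigma a,c$, not $a,c\prec_\sigma y$ as you wrote.

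Once the direction is corrected the outline is sound: from $y\prec_\sigma a,c$ you aim to force $a,c\prec_\tau y$, contradicting that $\tau$ is AT-free. The ingredients you list are the right ones---Lemma~\ref{lemma3} to pass between $s$-layers and $t$-layers via the dominating pair $\{s,t\}$, the LBFS$^{+}$ tie-break to invert $\tau$ within slices, and bad-claw-freeness to dispose of the case where $a,y,c$ share a common neighbour in the previous layer. The point where you should be most careful is that equal $t$-distance does not by itself place two vertices in the same LBFS slice, so the heuristic ``LBFS$^{+}$ reverses $\tau$'' needs the Prior Path Lemma (or an equivalent slice argument) to be made rigorous; write that case out in full rather than leaving it as a sketch.
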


\begin{figure}
	\begin{minipage}{0.3\textwidth}
		\begin{tikzpicture}[vertex/.style={inner sep=2pt,draw,circle},auto]
		
		\node[vertex,label=left:$1$] (1) at (1,3){};
		\node[vertex,label=left:$2$] (2) at (1,2){};
		\node[vertex,label=below:$3$] (3) at (-0.5,1){};
		\node[vertex,label=below:$4$] (4) at (0.5,1){};
		\node[vertex,label=below:$a$] (a) at (0,0){};
		\node[vertex,label=below:$b$] (b) at (2.5,1){};
		\node[vertex,label=below:$c$] (c) at (2,0){};
		\node[vertex,label=below:$z$] (z) at (1.5,1){};
		
		\draw[] (1)--(2)--(3)--(a)--(4)--(2)--(z)--(4)--(z)--(c)--(a)--(c)--(b)--(2);
		
		\end{tikzpicture}
	\end{minipage}
	\hfill
	\begin{minipage}{0.6\textwidth}
		\begin{tabular}{ll}
			LBFS: & $ \tau_1 $: $(1, 2, 4, z, 3, b, a, c)$ \\
			LBFS($ \tau_1 $): & $ \tau_2 $: $(c, a, b, z, 4, 3, 2, 1)$ \\
			LBFS($\tau_2 $): & $ \tau_3 $: $(1, 2, 3, 4, z, b, a, c)$ 
		\end{tabular}
	\end{minipage}
	\caption{Example of a graph with a bad claw. On the right, one can see that the second $ \tau_2 $ is not an AT-free order and $ \tau_3 $ is not a bilateral AT-free order. In fact, this is an example of an AT-free graph that does not possess a bilateral AT-free ordering.}
	\label{fig3}
\end{figure}

These results indicate that a linear time algorithm to construct AT-free orders could also exist for the general case of AT-free graphs. However, none of the techniques used for the (bad-)claw-free graphs can be transferred. In~\cite{corneil2015vertex} it was already shown that there are AT-free graphs which do not possess AT-free orders that are also LBFS orders. In addition, Figure~\ref{fig3} shows a graph which does not possess a bilateral AT-free ordering. Therefore, it will be necessary to use a different search algorithm, possibly a BFS-derivative based on $ \text{BFS}^{\conv} $. We summarise these suppositions in the following:

\begin{conjecture}\label{conjecture2}
	Let $ G =(V,E) $ be an AT-free graph. There is a linear time algorithm that computes an AT-free (BFS) order.
\end{conjecture}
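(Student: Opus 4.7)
The plan is to design a modified BFS that avoids the a priori $O(n^2)$-time interval preprocessing required by Algorithm~\ref{bfs-conv} and instead maintains a compact local certificate that each candidate vertex preserves convexity of the current prefix. The strategy mimics the way linear-time LBFS relies on partition refinement rather than on explicit label comparisons, and it aims for a robust algorithm in the sense of the comparability graph analogy mentioned in the introduction: the output should be an AT-free order whenever the input is AT-free, with no attempt to solve recognition inside the search.

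First, I would invoke an LBFS to obtain an admissible starting vertex $s$ by Lemma~\ref{lemma2}, since Theorem~\ref{theorem4} shows that starting in an admissible vertex is what couples the AT-free property to the monotone dominating pair structure. Next, I would run a BFS from $s$ whose tie-break at step $i$ is governed by a constant-time local test detecting whether appending a queued candidate $v$ would place an unnumbered vertex $p$ strictly between the current prefix and $v$. Item (ii) of the characterising theorem, transitivity of $\mathcal{B}_D$, suggests that maintaining a single pointer per unnumbered $p$, pointing to the $\tau$-latest prefix vertex currently witnessing the separation of $p$ from the evolving spine, would be sufficient, and that these pointers can be updated by a partition-refinement scheme analogous to the linear-time implementation of LBFS.

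The main obstacle, and the reason the statement remains a conjecture, will be proving that these pointers can in fact be maintained in total $O(n+m)$ time. Verifying that the chosen $v_i$ does not create a violation $(u,p,v_i) \in \mathcal{B}_D$ involves reasoning about two chordless paths simultaneously, which is not directly captured by any existing BFS or LBFS invariant. The most promising angle seems to be to combine the spine property of item (iii) with Lemma~\ref{lemma3}: along a dominating pair the distances from the two endpoints behave monotonically, so each separation test should reduce to comparing BFS layer indices against a small number of precomputed witnesses rather than searching for entire paths. If this reduction goes through, each edge is examined a constant number of times across the whole search.

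The complete proof would then split into three parts: a correctness argument patterned after Theorem~\ref{theorem2} but working with the sparse local certificate instead of full convex hulls; a data-structure analysis showing linear-time maintenance of the separator pointers, probably by amortising update costs against the partition-refinement tree; and a demonstration that transitivity of $\mathcal{B}_D$ together with the spine property suffice to justify every constant-time test the algorithm performs. Alternative fallback strategies include running a constant number of $\text{BFS}^{\conv}$-like passes where each successive sweep refines the previous order using monotone dominating pair information, in the spirit of the multi-sweep approach used for claw-free AT-free graphs earlier in the paper.
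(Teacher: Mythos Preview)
The statement you are attempting to prove is labelled \emph{Conjecture} in the paper and is explicitly presented as open; the paper supplies no proof. Hence there is nothing to compare your proposal against, and any ``proof'' of this statement would constitute new research beyond the paper's content.

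Your write-up is not a proof but a research outline, and you recognise this yourself: the sentence ``The main obstacle, and the reason the statement remains a conjecture, will be proving that these pointers can in fact be maintained in total $O(n+m)$ time'' names precisely the gap. Everything hinges on the claimed constant-time local test for ``appending $v$ would place some unnumbered $p$ strictly between the prefix and $v$'', and on the assertion that a single pointer per unnumbered vertex suffices and can be updated via partition refinement. Neither claim is substantiated. Detecting $(u,p,v)\in\mathcal{B}_D$ requires certifying the existence of two chordless avoiding paths, and there is no known way to reduce this to a comparison of BFS layer indices or to a bounded number of precomputed witnesses; Lemma~\ref{lemma3} gives monotonicity of distances for non-adjacent pairs along a dominating pair, but it says nothing about chordless paths or about avoidance, which is what the betweenness relation actually needs. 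Transitivity of $\mathcal{B}_D$ alone does not imply that a single ``latest separating witness'' pointer captures all potential violations, and the analogy with LBFS partition refinement is only suggestive, since LBFS labels encode neighbourhood inclusions, not path-avoidance certificates.

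In short: the paper leaves this open, your proposal identifies the right difficulty, but it does not resolve it.
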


\section{Conclusion}

We resolved an open question from~\cite{corneil2015vertex} by proving that any given AT-free graph has an AT-free order that coincides with a BFS order. The proof implied a polynomial time algorithm for the computation of such an order that is at least as fast as recognition. As a result, we were able to show that there is a close link between the vertex order characterisation of AT-free graphs, and their characterisation through the spine property. As checking whether a vertex order is an AT-free order is in fact of the same difficulty as recognising AT-free graphs, it should still be possible to find AT-free orders in linear time. This could be done by giving a linear time implementation of $ \text{BFS}^{\conv} $ or by constructing another search scheme with similar structural properties.

For the special case of claw-free AT-free graphs we have shown that multiple applications of BFS yield AT-free orders with additional structural properties. In fact, if we exchange generic BFS with LexMinBFS, a derivative defined in~\cite{meister2005recognition}, we can construct an AT-free, monotone dominating pair order that is also a minimal interval completion order. While claw-free AT-free graphs form a strongly restricted subclass of AT-free graphs, it is important to recall that their recognition has been shown to be at least as hard as triangle recognition, the same bound given to the recognition of general AT-free graphs. Furthermore, the results on bad-claw-free graphs can be seen as a first step toward a resolution of Conjecture~\ref{conjecture2}, and give us a strong notion where the algorithmic difficulties lie.

Linear vertex orderings of other graph classes, such as interval orderings or cocomparability orderings, have found many applications in optimisation algorithms on these classes. To the best knowledge of the author, no such results are known with respect to AT-free orderings. By using AT-free BFS orderings such results might be easier to attain. Two of the most likely candidates are the independent set problem and the vertex colouring problem. However, in the case of vertex colouring even for cocomparability graphs there is no known algorithm that utilises the cocomparability ordering. Should it be possible to compute AT-free orders in linear time, it might even be possible to develop robust optimisation algorithms (see~\cite{spinrad2003efficient}) on AT-free graphs, similar to the maximum clique algorithm on comparability graphs.

Finally, it is still an open question whether every AT-free graph admits a DFS order whose reversal is AT-free~\cite{corneil2015vertex}.

\bibliographystyle{plain}

\begin{thebibliography}{splncs}
	
	\bibitem{broersma1999independent}
	Hajo Broersma, Ton Kloks, Dieter Kratsch, and Haiko M{\"u}ller.
	\newblock Independent sets in asteroidal triple-free graphs.
	\newblock {\em SIAM Journal on Discrete Mathematics}, 12(2):276--287, 1999.
	
	\bibitem{chang2015gray}
	Jou-Ming Chang, Ton Kloks, and Hung-Lung Wang.
	\newblock {Gray codes for AT-free orders via antimatroids}.
	\newblock In {\em International Workshop on Combinatorial Algorithms}, pages
	77--87. Springer, 2015.
	
	\bibitem{chang2017convex}
	Jou-Ming Chang, Ton Kloks, and Hung-Lung Wang.
	\newblock Convex geometries on at-free graphs and an application to generating
	the at-free orders.
	\newblock {\em arXiv preprint arXiv:1706.06336}, 2017.
	
	\bibitem{chvatal2009antimatroids}
	Va{\v{s}}ek Chv{\'a}tal.
	\newblock Antimatroids, betweenness, convexity.
	\newblock In {\em Research Trends in Combinatorial Optimization}, pages 57--64.
	Springer, 2009.
	
	\bibitem{corneil2004simple}
	Derek~G Corneil.
	\newblock {A simple 3-sweep LBFS algorithm for the recognition of unit interval
		graphs}.
	\newblock {\em Discrete Applied Mathematics}, 138(3):371--379, 2004.
	
	\bibitem{corneil2004lexicographic}
	Derek~G Corneil.
	\newblock Lexicographic breadth first search--a survey.
	\newblock In {\em International Workshop on Graph-Theoretic Concepts in
		Computer Science}, pages 1--19. Springer, 2004.
	
	\bibitem{corneil1997asteroidal}
	Derek~G Corneil, Stephan Olariu, and Lorna Stewart.
	\newblock Asteroidal triple-free graphs.
	\newblock {\em SIAM Journal on Discrete Mathematics}, 10(3):399--430, 1997.
	
	\bibitem{corneil1999linear}
	Derek~G Corneil, Stephan Olariu, and Lorna Stewart.
	\newblock Linear time algorithms for dominating pairs in asteroidal triple-free
	graphs.
	\newblock {\em SIAM Journal on Computing}, 28(4):1284--1297, 1999.
	
	\bibitem{corneil2009lbfs}
	Derek~G Corneil, Stephan Olariu, and Lorna Stewart.
	\newblock {The LBFS structure and recognition of interval graphs}.
	\newblock {\em SIAM Journal on Discrete Mathematics}, 23(4):1905--1953, 2009.
	
	\bibitem{corneil2015vertex}
	Derek~G Corneil and Juraj Stacho.
	\newblock Vertex ordering characterizations of graphs of bounded asteroidal
	number.
	\newblock {\em Journal of Graph Theory}, 78(1):61--79, 2015.
	
	\bibitem{edelman1985theory}
	Paul~H Edelman and Robert~E Jamison.
	\newblock The theory of convex geometries.
	\newblock {\em Geometriae dedicata}, 19(3):247--270, 1985.
	
	\bibitem{hempel2002claw}
	Harald Hempel and Dieter Kratsch.
	\newblock On claw-free asteroidal triple-free graphs.
	\newblock {\em Discrete Applied Mathematics}, 121(1):155--180, 2002.
	
	\bibitem{kohler2015linear}
	Ekkehard K{\"o}hler.
	\newblock Linear structure of graphs and the knotting graph.
	\newblock In {\em Gems of Combinatorial Optimization and Graph Algorithms},
	pages 13--27. Springer, 2015.
	
	\bibitem{koehler1999graphs}
	Ekkehard~G K{\"o}hler.
	\newblock {\em Graphs without asteroidal triples}.
	\newblock Cuvillier, 1999.
	
	\bibitem{kratsch2006between}
	Dieter Kratsch and Jeremy Spinrad.
	\newblock Between o (nm) and o (nalpha).
	\newblock {\em SIAM Journal on Computing}, 36(2):310--325, 2006.
	
	\bibitem{lekkekerker1962representation}
	C~Lekkerkerker and J~Boland.
	\newblock Representation of a finite graph by a set of intervals on the real
	line.
	\newblock {\em Fundamenta Mathematicae}, 51(1):45--64, 1962.
	
	\bibitem{meister2005recognition}
	Daniel Meister.
	\newblock {Recognition and computation of minimal triangulations for AT-free
		claw-free and co-comparability graphs}.
	\newblock {\em Discrete Applied Mathematics}, 146(3):193--218, 2005.
	
	\bibitem{rose1976algorithmic}
	Donald~J Rose, R~Endre Tarjan, and George~S Lueker.
	\newblock Algorithmic aspects of vertex elimination on graphs.
	\newblock {\em SIAM Journal on computing}, 5(2):266--283, 1976.
	
	\bibitem{spinrad2003efficient}
	Jeremy~P Spinrad.
	\newblock {\em Efficient graph representations}.
	\newblock American Mathematical Society, 2003.
	
\end{thebibliography}

\end{document}